\PassOptionsToPackage{safe}{pbalance}
\documentclass[sigconf,nonacm,pbalance]{acmart}

\newcommand{\sysattack}{\textsc{Collapse}\xspace}
\newcommand{\sysdefense}{$\mathcal{O}_\text{ext}$\xspace}
\newcommand{\priorboundary}{$\mathcal{O}_\text{prior}$\xspace}
\newcommand{\extboundary}{$\mathcal{O}_\text{ext}$\xspace}

\providecommand{\nnz}{\operatorname{nnz}}
\providecommand{\suppop}{\operatorname{supp}}
\providecommand{\colop}{\operatorname{col}}

\newcommand{\eg}{\hbox{{e.g.,}}\xspace}
\newcommand{\ie}{\hbox{{i.e.,}}\xspace}

\newcommand{\parab}[1]{\noindent{\bf #1}\xspace}

\usepackage{graphicx}
\usepackage{placeins}

\usepackage{subcaption} 

\usepackage{tabularx} 

\usepackage{booktabs} 

\usepackage{multirow} 

\usepackage{xspace}

\usepackage{enumitem}

\usepackage{makecell} 

\usepackage{amsmath}
\usepackage{amsthm}
\usepackage{pifont}

\theoremstyle{plain}
\newtheorem{theorem}{Theorem}

\theoremstyle{definition}

\newtheorem{definition}{Definition}

\theoremstyle{remark}

\usepackage{tcolorbox}

\newtcolorbox{challengebox}{
    colback=gray!10,      
    colframe=black,       
    boxrule=0.5pt,        
    arc=0pt,              
    boxsep=2pt,           
    left=4pt, right=4pt, top=4pt, bottom=4pt, 
    sharp corners,        
    halign=left,          
    fontupper=\small      
}
\newtcolorbox{insightbox}[1]{
    colback=gray!6,
    colframe=black,
    boxrule=0.5pt,
    arc=1pt,
    title={#1},
    fonttitle=\bfseries
}

\usepackage[ruled,vlined,linesnumbered]{algorithm2e}

\AtBeginDocument{%
  }

\copyrightyear{2026}
\acmYear{2026}
\setcopyright{cc}
\setcctype{by}
\acmConference[CCS '26]{Proceedings of the 2026 ACM SIGSAC Conference on Computer and Communications Security}{November 15--19, 2026}{The Hague, Netherlands}
\acmBooktitle{Proceedings of the 2026 ACM SIGSAC Conference on Computer and Communications Security (CCS '26), November 15--19, 2026, The Hague, Netherlands}
\acmDOI{10.1145/3830454.3846712}
\acmISBN{979-8-4007-2871-6/2026/11}

\begin{document}

\title{Understanding the Security Boundary of Obfuscation-based On-Device LLM Protection} 

\author{Hanyi Zhou}
\affiliation{%
  \institution{Tsinghua University}
  \city{Beijing}
  \country{China}}
\email{hy-zhou24@mails.tsinghua.edu.cn}

\author{Chenyang Li}
\affiliation{%
  \institution{Tsinghua University}
  \city{Beijing}
  \country{China}}
\email{lcy23@mails.tsinghua.edu.cn}

\author{Yuanzhe Pang}
\affiliation{%
  \institution{Tsinghua University}
  \city{Beijing}
  \country{China}}
\email{pangyz24@mails.tsinghua.edu.cn}

\author{Ke Xu}
\affiliation{%
  \institution{Tsinghua University}
  \city{Beijing}
  \country{China}}
\email{xuke@tsinghua.edu.cn}

\author{Mingwei Xu}
\affiliation{%
  \institution{Tsinghua University}
  \city{Beijing}
  \country{China}}
\email{xumw@tsinghua.edu.cn}

\author{Zhuotao Liu}
\authornote{Zhuotao Liu is the corresponding author.}
\affiliation{%
  \institution{Tsinghua University}
  \city{Beijing}
  \country{China}}
\email{zhuotaoliu@tsinghua.edu.cn}

\renewcommand{\shortauthors}{Hanyi Zhou et al.}

\begin{abstract}
Trusted Execution Environments (TEEs) offer a promising mechanism for safeguarding the intellectual property of on-device Large Language Models (LLMs). 
To overcome the inherent computational bottlenecks of TEEs, existing TEE-Shielded LLM Partition (TSLP) methods apply efficient obfuscation schemes to computationally intensive layers, offloading them to external GPUs while retaining only lightweight operations within the TEE. 
Although a growing body of TSLP-based approaches has emerged, these defense mechanisms remain largely heuristic. Consequently, some methods are proven vulnerable to certain specialized adversarial attacks designed to exploit their specific architectural implementations.

To overcome the limitations of these heuristic designs, this paper addresses a fundamental research question: can we establish common primitives to unify representative prior methodologies, characterize the security boundary of their compositions, and systematically extend them?
To this end, we formalize a set of obfuscation primitives, defined as dual-tuples of linear computations satisfying specific algebraic properties. 
We demonstrate that the matrix-level weight transformations of several representative efficient TSLP frameworks can be expressed as compositions of these primitives; consequently, the canonical form of these primitive compositions, denoted as \priorboundary, defines the security boundary of this primitive family. 
We then expose the vulnerabilities of \priorboundary through a novel primitive-guided attack methodology, \sysattack, demonstrating a shared vulnerability in several prominent TSLP methods published in top-tier venues, such as ArrowCloak (Security'25), TSQP (S\&P'25), and LoRO (NeurIPS'25).
Finally, we introduce two novel obfuscation primitives and integrate them with existing constructs to formulate \sysdefense, extending the prior security boundary \priorboundary.
Extensive experiments demonstrate that \sysattack can extract high-fidelity surrogate models from prior defenses, whereas \sysdefense reduces surrogate accuracy to the empirical black-box reference on average.

\end{abstract}

\begin{CCSXML}
<ccs2012>
<concept>
<concept_id>10002978.10003022</concept_id>
<concept_desc>Security and privacy~Software and application security</concept_desc>
<concept_significance>500</concept_significance>
</concept>
<concept>
<concept_id>10010147.10010257</concept_id>
<concept_desc>Computing methodologies~Machine learning</concept_desc>
<concept_significance>300</concept_significance>
</concept>
</ccs2012>
\end{CCSXML}

\ccsdesc[500]{Security and privacy~Software and application security}
\ccsdesc[300]{Computing methodologies~Machine learning}

\keywords{Secure LLM Infrastructure, Trusted Execution Environment (TEE), \texorpdfstring{\newline}{ } Obfuscation-based Secure LLM Inference} 


\maketitle

\section{Introduction}
Large language models (LLMs)~\cite{vaswani2017attention,brown2020gpt3,touvron2023llama} are increasingly deployed on end devices
to reduce inference latency while keeping user data on-device~\cite{gunter2024appleintelligence,qwen2024qwen25,liu2024mobilellm,geminiteam2023gemini}.
Because these models are valuable proprietary assets, 
placing the model weights on end devices raises significant concerns regarding model theft.
Compared to cloud-based LLM serving, a local adversary could obtain white-box visibility into code, parameters, memory, and intermediate activations of the on-device LLMs that the cloud-based APIs would never expose, resulting in model-extraction~\cite{tramer2016stealing,orekondy2019knockoff,krishna2020thieves,jagielski2020high,carlini2024stealing,nayan2024sok} and privacy attacks~\cite{shokri2017membership,fredrikson2015model,wang2018stealing}.

Trusted Execution Environments (TEEs) such as Intel~SGX~\cite{costan2016sgx} and Arm~TrustZone~\cite{pinto2019trustzone} provide hardware-isolated execution even when the operating system is untrusted, making them suitable for securing the on-device models. 
However, commodity TEEs are not scalable to host full LLM inference on their own: secure memory is limited, and their compute throughput falls far short of what modern LLM serving requires.
Recent works therefore adopt a TEE-Shielded LLM Partition (TSLP) methods~\cite{tramer2019slalom,mo2020darknetz,lee2019occlumency,shen2022soter,li2024translinkguard,wang2025arrows,xiong2025loro,zhou2023nnsplitter}, in which the TEE computes the lightweight non-linear computations,
(e.g., $\operatorname{LayerNorm}$, $\operatorname{Softmax}$), while the compute-intensive linear operators (\ie matrix multiplications) are offloaded to the surrounding Rich Execution Environments (REEs), such as the commodity GPU/NPU runtime. 
As these REEs are not trusted to be secure, the TEEs are required to obfuscate the model weights (for instance via various matrix manipulation tricks) before offloading them to REEs.
The REEs then compute on the obfuscated inputs, and the TEE subsequently recovers the correct results by ``subtracting'' the obfuscation factors from the results. 

Existing research on TSLP has developed along two parallel lines: developing new obfuscation schemes and exploiting the vulnerabilities of these schemes. 
However, existing obfuscation methodologies are largely ad-hoc.
Each obfuscation scheme invents tricks based on
certain efficiently-computable algebraic operations, such as column-wise permutation~\cite{li2024translinkguard}, column-wise scaling~\cite{shen2022soter,sun2025tsqp}, additive masking~\cite{tramer2019slalom,zhou2023nnsplitter,sun2023shadownet}, or low-rank masking~\cite{xiong2025loro}, and applies them to masking the offloaded linear layers.
In each case, the authors show that a handful of obvious inversions fail against the chosen tricks and declare their proposed scheme secure. 
Meanwhile, attacks against these mechanisms are likewise ad-hoc, where each targets a specific class of obfuscation methods. Zhang~et~al.~\cite{zhang2024noprivacy} target schemes built on additive masking~\cite{tramer2019slalom,zhou2023nnsplitter}, and ArrowMatch~\cite{wang2025arrows} targets schemes built on column-wise permutation and scaling~\cite{shen2022soter,li2024translinkguard}.

To overcome the limitations of these heuristic designs, this paper addresses a fundamental research question: can we establish common primitives to unify representative prior methodologies, characterize the security boundary of their compositions, and systematically extend them?
To this end, we formalize a set of
\emph{obfuscation primitives}, defined as dual-tuples of linear computations satisfying specific algebraic properties. 
We demonstrate that the matrix-level weight transformations studied in several prominent TSLP methods published in top-tier venues~\cite{zhou2023nnsplitter,sun2023shadownet,shen2022soter,li2024translinkguard,xiong2025loro,sun2025tsqp,zhang2024groupcover,wang2025arrows} can be expressed as compositions of these primitives.
Specifically, we identify two categories of primitives, multiplicative primitives (permutation $\Pi$ and scaling $D$) and additive primitives (sparse mask $S$ and low-rank mask $L$).
Each primitive is invertible and efficiently-computable, so any combination of them is also efficiently-computable within TEE's secure enclave. 
Given these unified primitives, we identify \priorboundary as the canonical form of primitive compositions of these primitives, which can represent the security boundary of existing TSLP methods.


Yet, we demonstrate that $\mathcal{O}_\textsf{prior}$ is vulnerable, revealing a shared vulnerability in these approaches.
Specifically, we develop a novel attack methodology \sysattack against $\mathcal{O}_\textsf{prior}$. 
Our key insight is that the column directions of the protected model weight $W_\textsf{vic}$ cannot be effectively masked by arbitrary composition of these primitives. 
Specifically, column-wise permutation $\Pi$ and column-wise scaling $D$ only permute and rescale columns, so the column directions of $\mathcal{M}_\textsf{vic}$ are preserved and remain correlated with the public $\mathcal{M}_\textsf{pre}$. 
Second, sparse mask $S$ only masks a tiny fraction of elements in $\mathcal{M}_\textsf{vic}$, so the column direction of $\mathcal{M}_\textsf{vic}$ are largely intact. 
Third, low-rank mask $L$ hides $\mathcal{M}_\textsf{vic}$ inside one low-dimensional subspace $\mathcal{V}_L$, so the column directions of $\mathcal{M}_\textsf{vic}$ are intact in the subspace orthogonal to $\mathcal{V}_L$. 
As a result, we can apply a systematic methodology to leverage such leaked information to obtain a surrogate model $\mathcal{M}_\textsf{sur}$ with the same architecture and comparable performance with $\mathcal{M}_\textsf{vic}$. 
At a high level, our attack methodology has three steps: 
aligning the permutations of multiple observations of $\mathcal{M}_\textsf{vic}$; combining the aligned observations to recover the low-rank mask $L$ and locating the positions that has been masked by the sparse mask $S$ to obtain an intermediate model $\mathcal{M}_\textsf{init}$; fine-tuning $\mathcal{M}_\textsf{init}$ on a small labeled dataset (collected by querying $\mathcal{M}_\textsf{vic}$) to remove the remaining sparse mask $S$ and scaling $D$, resulting in $\mathcal{M}_\textsf{sur}$.

To further extend this security boundary, we propose two novel obfuscation primitives: the \emph{sparse multiplicative primitive} mixes the columns of $\mathcal{M}_\textsf{vic}$ and \emph{double-sided multiplicative primitive} disturbs both column and row of $\mathcal{M}_\textsf{vic}$.
Combining these two primitives with existing primitives yields \sysdefense, which disrupts the row- and column-wise invariants on which \sysattack relies.
Extensive experiments across four representative models (\eg BERT-Base, ViT-Base, Qwen2.5-0.5B, and Qwen2.5-1.5B) demonstrate that while \sysattack can extract high-fidelity surrogate models secured by prior defenses, \sysdefense reduces the resulting surrogate accuracy to the empirical black-box reference on average.

We clarify that we do not claim the universal security of \sysdefense. In particular, we evaluate \sysdefense only against the existing
\sysattack pipeline, without a defense-specific adaptation. Thus, we do not claim the universal security of \sysdefense. Yet, our main contribution is to demonstrate a unified methodology to understand the security of obfuscation-based on-device LLM protection. Concretely, the contributions of our work are as follows.

\begin{itemize}[leftmargin=2em]
   \item \textbf{Unified Obfuscation Primitives for TSLP Approaches.} We formalize a set of obfuscation primitives that underpin the matrix-level weight transformations of several representative TSLP methods published in top-tier venues, providing a common algebraic abstraction for understanding and analyzing these approaches within a single framework.

   \item \textbf{Identify the Security Boundary.} Building on the unified analysis abstraction, we define $\mathcal{O}_{\text{prior}}$ as a canonical structural form of defense by composing existing primitives, which represents the security boundary of these methods.

   \item \textbf{Breaking the Security Boundary.} We develop \sysattack, a systematic, primitive-guided attack methodology against $\mathcal{O}_{\text{prior}}$ and its instantiations, demonstrating a common vulnerability shared by existing TSLP approaches.

   \item \textbf{Extending the Boundary.} Guided by \sysattack, we propose \sysdefense to further extend the security boundary beyond $\mathcal{O}_{\text{prior}}$ by introducing novel obfuscation primitives. We evaluate both \sysattack and \sysdefense across representative models, datasets, and obfuscation schemes.

 \end{itemize}

\section{Background and Threat Model}
\label{sec:background}

\subsection{Transformer Architecture and TEE--REE Collaborative Inference}
\label{sec:bg_tee_ree}

\subsubsection{Operator Classes in Transformer}
As illustrated in Fig.~\ref{fig:bg_transformer_partition}, a transformer block contains three operator classes that differ on what matters for on-device LLM protection---whether they carry proprietary weights and whether they dominate arithmetic cost. $\operatorname{Op}_{\text{linear}}$ comprises the weight-bearing linear transforms (QKV, output projection, feed-forward-network (FFN) up/down), which carry the proprietary parameters. $\operatorname{Op}_{\text{quadra}}$ comprises the token-to-token products inside self-attention, whose cost grows quadratically with sequence length. $\operatorname{Op}_{\text{non-poly}}$ collects the remaining lightweight non-linear operators (normalization, activation, and residual).

\begin{figure}[!htb]
  \centering
  \includegraphics[width=0.83\linewidth]{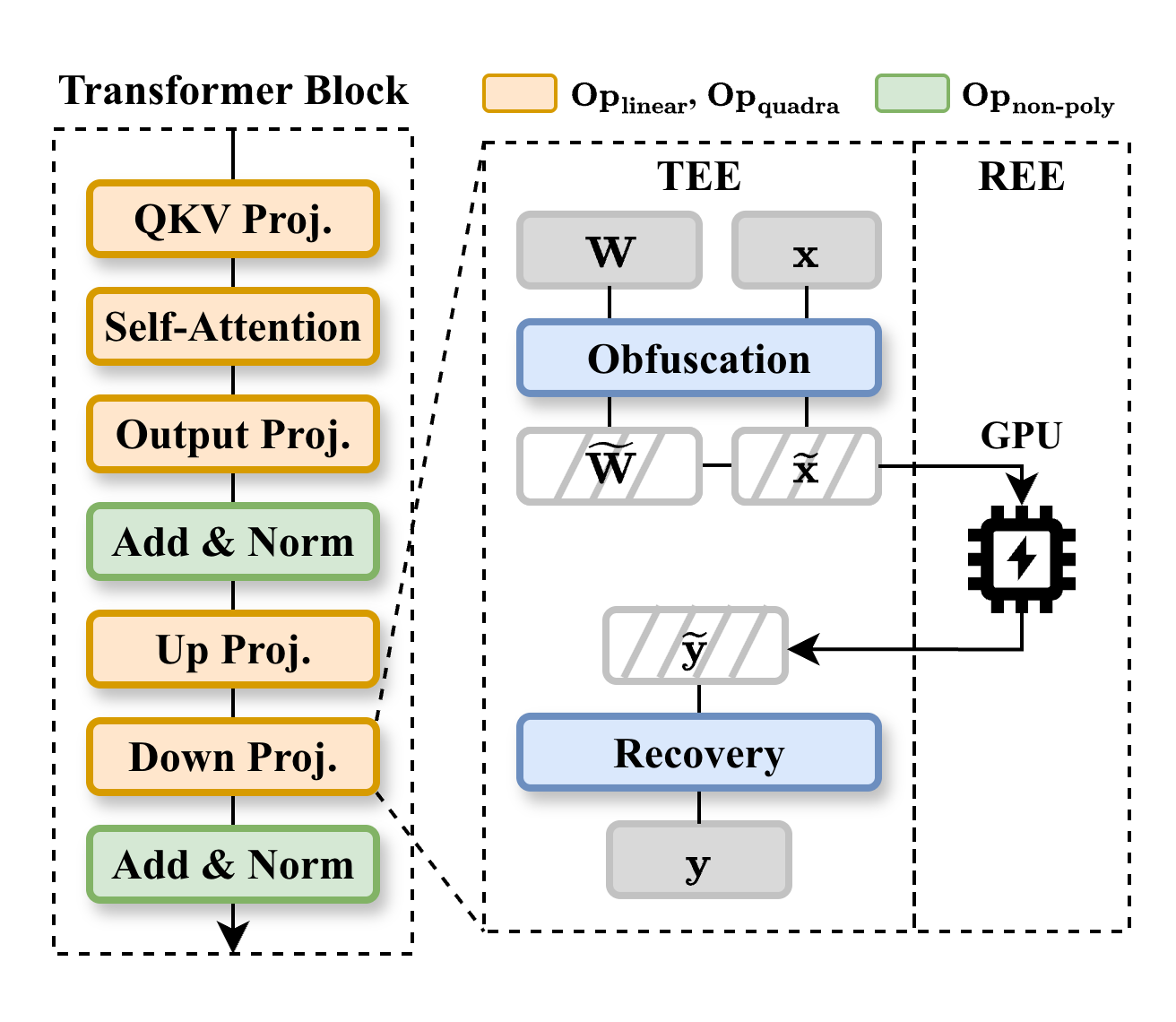}
  \Description{A transformer block whose QKV projection, self-attention, output projection, up projection, down projection, and Add and Norm stages illustrate three classes of inference operators. Beside it, a representative offloaded stage is shown flowing through obfuscation in the TEE, computation in the REE, and recovery back in the TEE.}
  \caption{Operator classes in a transformer block (left) and the obfuscate--compute--recover pipeline for computing $\operatorname{Op}_{\text{linear}}$ in TSLP methods (right).}
  \label{fig:bg_transformer_partition}
\end{figure}

\subsubsection{TEE--REE Collaborative Inference}
Trusted Execution Environments such as Intel~SGX~\cite{costan2016sgx} and Arm~TrustZone~\cite{pinto2019trustzone} enforce code and data isolation even when the OS is untrusted, but have limited secure memory and weaker throughput than the accelerators used for LLM inference. Recent systems therefore retain the lightweight $\operatorname{Op}_{\text{non-poly}}$ stages inside the TEE and offload $\operatorname{Op}_{\text{linear}}$ and $\operatorname{Op}_{\text{quadra}}$ to the untrusted Rich Execution Environment (REE)~\cite{tramer2019slalom,lee2019occlumency,mo2020darknetz,shen2022soter,li2024translinkguard,wang2025arrows,zhang2024noprivacy}. Before offloading, the weight-bearing linear layers are obfuscated. 
For input $\mathbf{X}$ and protected weight $\mathbf{W}$, the TEE releases the obfuscated $\widetilde{\mathbf{W}}=\mathcal{O}(\mathbf{W})$, the REE returns $\mathbf{X}\widetilde{\mathbf{W}}$, and the TEE applies a recovery map $\mathcal{R}$ satisfying $\mathcal{R}(\mathbf{X}\widetilde{\mathbf{W}})=\mathbf{X}\mathbf{W}$.

\subsection{Threat Model}
\label{sec:threat_model}
Figure~\ref{fig:bg_attack_surface} summarizes the threat model. We adopt an \emph{honest-but-curious} adversary model: the on-device LLM inference follows the TEE--REE protocol correctly (so that the recovered output $\mathcal{R}(\mathbf{X}\widetilde{\mathbf{W}})$ equals $\mathbf{X}\mathbf{W}$), but the device owner can record every distinct exposed view $\widetilde{\mathbf{W}}_t$ across key refreshes and attempts to reconstruct $\mathbf{W}$ from them together with public prior knowledge.

\begin{figure}[!htb]
  \centering
  \includegraphics[width=\linewidth]{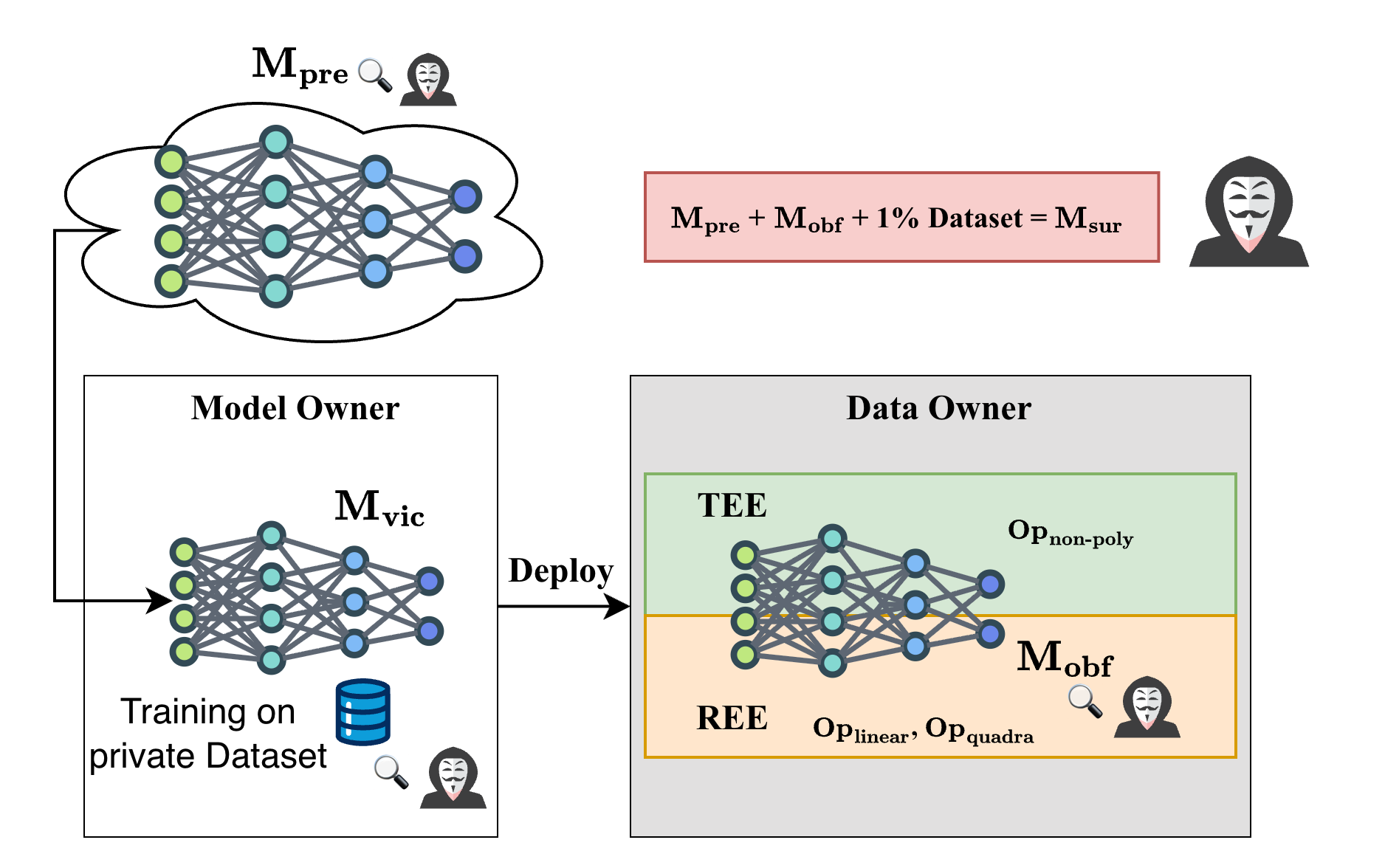}
  \Description{A threat-model illustration for obfuscation-based TEE--REE inference. The model owner starts from a public pretrained model M_pre, fine-tunes it on private data to obtain a victim model M_vic, and deploys an obfuscated runtime view M_obf in which the TEE keeps non-linear operators while the REE executes linear and quadratic operators. The attacker controlling the device combines M_pre, M_obf, and one percent of task data to construct a surrogate model M_sur.}
  \caption{Threat model. The model owner fine-tunes $\mathcal{M}_\textsf{pre}$ into a victim $\mathcal{M}_\textsf{vic}$ and deploys it as $\mathcal{M}_\textsf{obf}$; the device-owner attacker combines $\mathcal{M}_\textsf{pre}$, $T$ distinct exposed views, and $\le 1\%$ task data into a surrogate $\mathcal{M}_\textsf{sur}$.}
  \label{fig:bg_attack_surface}
\end{figure}

\noindent\textbf{System model.}
The model owner starts from a public pretrained model $\mathcal{M}_\textsf{pre}$ and fine-tunes it on private data to obtain the proprietary victim model $\mathcal{M}_\textsf{vic}$. $\mathcal{M}_\textsf{vic}$ is deployed on the user's device under the TEE--REE pipeline of \S~\ref{sec:bg_tee_ree}, producing the obfuscated runtime view $\mathcal{M}_\textsf{obf}$ to the REE.

\noindent\textbf{Defender goal and mechanism.}
The defender keeps $\mathcal{M}_\textsf{vic}$ usable while preventing recovery of $\mathbf{W}$. Following standard practice~\cite{wang2025arrows,xiong2025loro,zhou2023nnsplitter,sun2025tsqp,li2024translinkguard,shen2022soter}, the TEE primarily adopts a \emph{dynamic-key} strategy, refreshing obfuscation keys every few rounds to prevent attacks that accumulate observations under a static key.

\noindent\textbf{TEE--REE trust model.}
Following prior TSLP defenses~\cite{wang2025arrows,xiong2025loro,zhou2023nnsplitter,hou2022model,li2024translinkguard}, the GPU is an untrusted REE accelerator and the TEE (e.g., SGX, TrustZone, or equivalent) is the trust anchor. In each LLM inference, the REE sees the released input, the obfuscated weight $\widetilde{\mathbf{W}}_t$, and the product it computes; $\mathbf{W}$ and the post-recovery $\mathbf{X}\mathbf{W}$ remain inside TEE. 
REEs that support confidential computing (such as the NVIDIA Hopper and Blackwell GPUs) are out of scope.

\noindent\textbf{Attack goal.}
Instead of an exact white-box recovery of every parameter of $\mathcal{M}_\textsf{vic}$, the attacker aims to construct a high-utility surrogate model $\mathcal{M}_\textsf{sur}$ whose performance is comparable with $\mathcal{M}_\textsf{vic}$.

\noindent\textbf{Attacker capabilities.}
The attacker fully controls the device and can performs the following measurements.
\begin{itemize}[leftmargin=2em]
  \item \textbf{Boundary visibility.} The attacker records every tensor that crosses the TEE--REE boundary and every REE-side tensor derived from it. Across key refreshes, the attacker obtains $T$ distinct exposed views $\{\mathbf{W}^{e}_t\}_{t=1}^{T}$, where $\mathbf{W}^{e}_t \equiv \widetilde{\mathbf{W}}_t$.
  \item \textbf{Limited task data.} The attacker queries the deployed model to collect a limited amount of labeled task data---typically less than $1\%$ of the original training set---for fine-tuning the surrogate model~\cite{wang2025arrows}.
  \item \textbf{Public pretrained prior.} The attacker uses standard model-fingerprinting techniques~\cite{zeng2024huref,pasquini2025llmmap} to identify the public backbone $\mathcal{M}_\textsf{pre}$ and treats it as prior knowledge about $\mathcal{M}_\textsf{vic}$.
\end{itemize}

\noindent\textbf{Out-of-scope assumptions.}
Hardware side-channel and tran\-sient-execution attacks against the TEE are outside the scope of this work.

\section{Systematizing Linear Obfuscation}
\label{sec:primitives}

\subsection{Obfuscation Primitives} \label{sec:def_primitives}
The fundamental building blocks of existing TSLP methods are what we term \textit{Obfuscation Primitives}, which abstract the pair of operations used to mask a weight matrix before offloading computation to the REE and to recover the correct result inside the TEE. A trivial instantiation is an i.i.d.\ Gaussian 
mask $R$, with $\mathcal{O}(W) = W + R$ and 
$\mathcal{R}(X\widetilde{W}) = X\widetilde{W} - XR$. Although this 
construction offers both exact correctness and strong security, it is 
self-defeating in our setting: since $R$ is dense, computing $XR$ 
inside the TEE incurs a cost equivalent to the original operation 
$XW$, rendering the offloading to the REE fundamentally redundant. 
A viable primitive must therefore \emph{jointly} satisfy correctness, 
efficiency, and a meaningful bound on leakage, as formalized below.

\begin{definition}[Obfuscation Primitive]
\label{def:obfuscation-primitive}\label{def:primitive}
An \emph{obfuscation primitive} is a pair of (possibly randomized) 
operators $(\mathcal{O}, \mathcal{R})$, where $\mathcal{O}$ maps a 
weight matrix $W$ to an obfuscated matrix 
$\widetilde{W} = \mathcal{O}(W)$, and $\mathcal{R}$ reconstructs 
the intended product from $X\widetilde{W}$. The pair 
$(\mathcal{O}, \mathcal{R})$ is called a \emph{valid} obfuscation 
primitive if it satisfies the following three properties:
\begin{enumerate}[leftmargin=2em]
    \item \textbf{Correctness.} For every input matrix $X$ and weight 
    matrix $W$,
    \[
        \mathcal{R}_{W}\!\bigl(X \cdot \mathcal{O}_{W}(W)\bigr) \;=\; XW,
    \]
    i.e., the recovery operator restores exact functional equivalence 
    with the original linear computation.

    \item \textbf{Efficiency.} Let $C(\cdot)$ denote the computational 
    cost of an operation. The TEE-side cost of obfuscation and 
    recovery is negligible relative to the original matrix multiplication:
    \[
        C(\mathcal{O}) + C(\mathcal{R}) \;\ll\; C(XW).
    \]

    \item \textbf{Model-Extraction Resistance.}
      We characterize the security property of an obfuscation primitive as its
    resistance to model extraction, using a security game based on the threat
    model in \S~\ref{sec:threat_model}.
    \emph{Game overview.} With an observation budget $T$ and a labeled task-data
    budget $B$, the adversary attempts to extract the victim model.
    \emph{Winning condition.} The adversary uses any polynomial-time
    attack strategy 
    to maximize the task accuracy of an output surrogate $\widehat{\mathcal{M}}$, and wins if
    \[
      \operatorname{Acc}_{\mathcal{D}_{\mathrm{eval}}}(\widehat{\mathcal{M}})
      \ge
      \operatorname{Acc}_{\mathcal{D}_{\mathrm{eval}}}(\mathcal{M}_{\mathrm{vic}})-\delta.
    \]
    Here, $\operatorname{Acc}_{\mathcal{D}_{\mathrm{eval}}}(\mathcal{M})$
    denotes the accuracy of $\mathcal{M}$ on
    the evaluation set $\mathcal{D}_{\mathrm{eval}}$, and $\delta\ge 0$
    is the tolerated accuracy deficiency of the surrogate model relative to the victim model.\footnote{For the classification tasks evaluated in this paper, we adopt $\delta=0.05$. Thus, an attack is considered successful if the surrogate model's accuracy is no more than five percentage points below that of the victim model.} The formal game and its success criterion are given in
    Appendix~\ref{app:extraction-game}.
 \end{enumerate}
\end{definition}

\subsection{From Primitives to Schemes}
\label{sec:composability}

A central property of obfuscation primitives is
that primitives compose without degrading any of the three
properties: correctness remains exact, TEE-side computation cost grows only additively, and under the conditions stated below, the composition is
no easier to extract than individual components.

\begin{theorem}[Primitives are closed under composition]
\label{thm:composability}\label{thm:composition}
Let $\mathcal{P} = (\mathcal{O}_1, \mathcal{R}_1)$ and
$\mathcal{Q} = (\mathcal{O}_2, \mathcal{R}_2)$ be compatible
obfuscation primitives. Their \emph{composition}
$\mathcal{Q} \circ \mathcal{P}$ is the pair
$(\mathcal{O}_2 \circ \mathcal{O}_1,\;
\mathcal{R}_1 \circ \mathcal{R}_2)$, where obfuscations are applied
in order and recoveries in reverse so that each $\mathcal{R}_i$
undoes its corresponding $\mathcal{O}_i$. Under the conditions below,
this composed pair is itself a valid obfuscation primitive, with:
\begin{enumerate}
  \item[\textnormal{(i)}] \textbf{Exact correctness}, by chaining the
    correctness of $\mathcal{P}$ and $\mathcal{Q}$;
  \item[\textnormal{(ii)}] \textbf{Additive TEE-side cost},
    $C(\mathcal{Q} \circ \mathcal{P}) =
    C(\mathcal{P}) + C(\mathcal{Q})$, still negligible relative to
    $C(XW)$;
  \item[\textnormal{(iii)}] \textbf{No weaker model-extraction
    resistance}. Intuitively, an attack that extracts the composition
    would also induce an attack on each component. Under the formal
    full-transcript simulation conditions detailed in 
    Appendix~\ref{app:composition-proof}, the composition is therefore
    no easier to extract than either $\mathcal{P}$ or $\mathcal{Q}$.
\end{enumerate}
\end{theorem}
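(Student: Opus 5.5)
The proof splits along the three properties of Definition~\ref{def:primitive}, and (i) and (ii) are routine. For correctness, I first make "compatible" precise: $\mathcal{R}_2$ must act on products of the form $X\cdot\mathcal{O}_2(\cdot)$ and return a product $X\,\mathcal{O}_1(W)$ that $\mathcal{R}_1$ accepts. Concretely, correctness of $\mathcal{Q}$ is applied with the weight argument $W' = \mathcal{O}_1(W)$, which gives
\[
\mathcal{R}_2\bigl(X\,\mathcal{O}_2(\mathcal{O}_1(W))\bigr) = X\,\mathcal{O}_1(W).
\]
Correctness of $\mathcal{P}$ then gives $\mathcal{R}_1\bigl(X\,\mathcal{O}_1(W)\bigr)=XW$. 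Since the randomness of $\mathcal{O}_1$ is fixed before $\mathcal{O}_2$ is sampled, the identity holds for every realization of the keys.

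For (ii), I would observe that the composed TEE work is exactly running $\mathcal{O}_1$, then $\mathcal{O}_2$, then $\mathcal{R}_2$, then $\mathcal{R}_1$. Summing the costs gives $C(\mathcal{P})+C(\mathcal{Q})$. Because each summand is $\ll C(XW)$ and there are only two of them (more generally, a constant number), the sum is still $\ll C(XW)$. One caveat needs stating: "$\ll$" must be read asymptotically, for example $o(C(XW))$, so that it is closed under finite sums.

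The substantive part is (iii), which I would prove by reduction in the extraction game of Appendix~\ref{app:extraction-game}. Suppose an adversary $\mathcal{A}$ wins against $\mathcal{Q}\circ\mathcal{P}$ with budgets $(T,B)$. I build $\mathcal{B}$ that wins against $\mathcal{P}$ with the same budgets, and symmetrically a $\mathcal{B}'$ that wins against $\mathcal{Q}$. In the game against $\mathcal{P}$, $\mathcal{B}$ receives views $\mathcal{O}_1(W)_t$. It samples fresh keys of $\mathcal{Q}$ itself and hands $\mathcal{A}$ the views $\mathcal{O}_2(\mathcal{O}_1(W)_t)$. It also forwards the labeled task-data queries unchanged, since the deployed model's input--output behaviour is identical by (i), and it simulates every REE-side product, because it knows the released inputs and the views.

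This is where the full-transcript simulation conditions enter. The simulated transcript must be distributed identically (or computationally indistinguishably) to the real transcript of the composed scheme. This holds when $\mathcal{Q}$'s keys are sampled independently of $W$ and of $\mathcal{P}$'s keys, and when everything $\mathcal{A}$ sees beyond the weight views is computable from public data. $\mathcal{B}$ then outputs $\mathcal{A}$'s surrogate $\widehat{\mathcal{M}}$, so its accuracy and therefore its winning probability are exactly those of $\mathcal{A}$, and $\mathcal{B}$ runs in polynomial time.

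The $\mathcal{Q}$ side is the main obstacle, because the outer layer cannot simply be peeled off. Given a view $\mathcal{O}_2(W)_t$, $\mathcal{B}'$ would need to produce $\mathcal{O}_2(\mathcal{O}_1(W))$ without knowing $W$. I would first try to require, as a simulation condition, that $\mathcal{O}_1$ and $\mathcal{O}_2$ commute up to reparametrization of keys, or more weakly that $\mathcal{O}_2\circ\mathcal{O}_1$ can be rewritten as $\mathcal{O}_1'\circ\mathcal{O}_2$ with key distributions sampleable by $\mathcal{B}'$. Examples are a permutation or scaling commuting with masks after conjugation, since $\Pi(W+S)=\Pi W+\Pi S$ and $\Pi S$ is again sparse. $\mathcal{B}'$ then applies the re-expressed inner primitive on top of its view. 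If this fails for some pair, I would state (iii) only for the orderings where the condition holds, which is what the theorem's hedge "under the conditions below" permits.
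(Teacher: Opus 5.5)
Your proposal is correct, and it rests on the same reduction as the paper. Parts (i) and (ii) are argued exactly as in the appendix. For (iii), the wrapper $\mathcal{A}\circ\mathsf{Sim}_{\mathcal{S}}$ turns an attack on $\mathcal{Q}\circ\mathcal{P}$ into an attack on $\mathcal{S}$ with the same surrogate distribution.

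The difference is where the substance of (iii) sits.

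\textbf{The paper's route.} The paper postulates, for each $\mathcal{S}\in\{\mathcal{P},\mathcal{Q}\}$, a polynomial-time simulator with $\mathsf{Sim}_{\mathcal{S}}(\mathsf{View}_{\mathcal{S}}^{T,B})\stackrel{d}{=}\mathsf{View}_{\mathcal{Q}\circ\mathcal{P}}^{T,B}$. That assumption \emph{is} the ``full-transcript simulation condition.'' Its proof is therefore symmetric and essentially definitional. Under that hypothesis your $\mathcal{Q}$-side obstacle disappears, so you do not need to restrict (iii) to particular orderings.

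\textbf{Your route.} You try to discharge the hypothesis rather than assume it. For the inner primitive you build the simulator explicitly: sample $\mathcal{Q}$'s keys and apply $\mathcal{O}_2$. This is valid under the independence and public-sampleability conditions you list. You also correctly locate the genuine assumption in the outer direction. This tells the reader something the paper's proof does not: ``no easier than the inner component'' holds almost for free, while ``no easier than the outer component'' does not.

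\textbf{A caution on the commutation route.} The condition must constrain the \emph{joint} law of all $T$ views, not each view separately.
\begin{itemize}
\item Under the schedule the paper evaluates (static $S$, refreshed $\Pi_t$), you have $(W+S)\Pi_t=W\Pi_t+S\Pi_t$. The re-expressed sparse keys $S\Pi_t$ are coupled across $t$ through the hidden $\Pi_t$.
\item So adding an independently sampled sparse mask to each view gives the wrong distribution. You must instead add $S'\,\Pi_1^{-1}\Pi_t$ with a single $S'$.
\item That works when two things hold: the relative permutations can be read off by matching the generically distinct columns of $W\Pi_1$ and $W\Pi_t$, and the law of $S$ is column-permutation invariant.
\item For refreshed scalings $D_t$, the analogous step requires sampling $SD_1$ jointly with the views. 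This is not possible in general, because the correct joint law ties the scale of the masked entries to the hidden $D_1$.
\end{itemize}
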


We provide the detailed proof in Appendix~\ref{app:composition-proof}.

Given this theorem, we define
the \emph{efficiency-preserving obfuscation family} generated by a \emph{base set} of obfuscation primitives $\mathcal{B} = \{\mathcal{P}_1, \dots, \mathcal{P}_N\}$ as
\[
    \mathcal{F}(\mathcal{B}) \;:=\;
    \Bigl\{\, \mathcal P =
    \mathcal{P}_{i_k} \circ \cdots \circ \mathcal{P}_{i_1}
    \;\Bigm|\;
    \substack{k \ge 1,\; i_1, \dots, i_k \in \{1, \dots, N\} \\
     C_{\mathrm{TEE}}(\mathcal P) \ll C(XW)}
    \,\Bigr\}.
\]
Here $C_{\mathrm{TEE}}(\mathcal P)$ denotes the accumulated
TEE-side obfuscation and recovery cost of the composition. Accordingly,
heavy compositions that exceed the efficiency budget in
Definition~\ref{def:obfuscation-primitive} are excluded from
$\mathcal{F}(\mathcal{B})$.{} In the next 
subsection, we decompose existing efficient TSLP schemes into their 
constituent primitives.

\subsection{Prior Schemes as Compositions of Primitives}
\label{sec:prior_works}

We now apply the framework of \S~\ref{sec:composability} to the
existing literature. Although prior TSLP schemes
\cite{hou2022model, li2024translinkguard, shen2022soter, sun2025tsqp, 
wang2025arrows, xiong2025loro, zhou2023nnsplitter} have been 
introduced as standalone designs, each with its own obfuscation 
scheme and security argument, we find that their matrix-level weight
transformations are built from a small set of primitives. This
subsection introduces a base set 
$\mathcal{B}_{\text{prior}} = \{\mathcal{P}_S, \mathcal{P}_L,
\mathcal{P}_\Pi, \mathcal{P}_D\}$ of four such primitives, and shows
that the matrix-level transformations summarized in
Table~\ref{tab:related_works} all belong to the family
$\mathcal{F}(\mathcal{B}_{\text{prior}})$.

\noindent\textbf{Taxonomy of Primitives.} 
The primitives fall into two paradigms based on their algebraic 
nature: \emph{additive} primitives, which mask $W$ by adding a 
structured noise term, and \emph{multiplicative} primitives, which 
transform $W$ by an invertible linear map. Their obfuscation and 
recovery operators are summarized in Table~\ref{tab:obf_comparison}.

\begin{table*}[t]
  \caption{The four obfuscation primitives identified in the 
  literature. $S$, $L = AB^\top$, $\Pi$, and $D$ denote a sparse, 
  low-rank, permutation, and non-singular diagonal matrix, 
  respectively. TEE cost is given for 
  $X \in \mathbb{R}^{m \times n}$, $W \in \mathbb{R}^{n \times n}$. 
  Security analysis is deferred to \S~\ref{sec:attack_insight}.}
  \label{tab:obf_comparison}
  \centering
  \small
  \setlength{\tabcolsep}{6pt}
  \begin{tabular}{@{}llcccc@{}}
    \toprule
    \textbf{Paradigm} & \textbf{Primitive} 
      & $\mathcal{O}(W)$ 
      & $\mathcal{R}(X\widetilde W)$ 
      & \textbf{TEE cost} 
      & \textbf{Notes} \\
    \midrule
    \multirow{2}{*}{Additive} 
      & $\mathcal{P}_S$ \;sparse mask 
      & $W + S$ 
      & $X\widetilde W - XS$ 
      & $O(m \cdot \nnz)$
      & $\nnz \ll n^2$ \\
      & $\mathcal{P}_L$ \;low-rank mask 
      & $W + AB^\top$ 
      & $X\widetilde W - (XA)B^\top$ 
      & $O(mn)$ 
      & $L = AB^\top$, $r$ small \\
    \midrule
    \multirow{2}{*}{Multiplicative} 
      & $\mathcal{P}_\Pi$ \;col-wise perm. 
      & $W\Pi$ 
      & $X\widetilde W \, \Pi^{-1}$ 
      & memcpy 
      & $\Pi\Pi^\top = I$ \\
      & $\mathcal{P}_D$ \;col-wise scale 
      & $WD$ 
      & $X\widetilde W \, D^{-1}$ 
      & $O(mn)$ 
      & $d_i \neq 0$ \\
    \bottomrule
  \end{tabular}
\end{table*}

\noindent\textbf{Additive Primitives.}
Both additive primitives share the form 
$\mathcal{O}(W) = W + R$ and 
$\mathcal{R}(X\widetilde W) = X\widetilde W - XR$, 
differing only in the structure of the additive term $R$. Their efficiency comes from the structure of $R$, which makes 
$XR$ cheaper to compute than $XW$ itself.

A \emph{sparse mask} sets $R = S$, where $S$ is a sparse matrix 
with $\nnz \ll n^2$ non-zero entries sampled by the TEE.\footnote{$\nnz$ denotes the number of nonzero entries in a sparse mask.} Recovery 
amounts to subtracting $XS$ from $X\widetilde W$, where $XS$ is a 
sparse--dense multiplication of cost $O(m \cdot \nnz)$, far below 
the $O(mn^2)$ cost of $XW$.

For a \emph{low-rank mask}, $R = AB^\top$ with 
$A, B \in \mathbb{R}^{n \times r}$ sampled by the TEE at small 
rank $r$. The TEE recovers $XW$ by subtracting $(XA)B^\top$ from 
$X\widetilde W$, two thin matrix multiplications of total cost 
$O(mn)$, again far below $O(mn^2)$.

\noindent\textbf{Multiplicative Primitives.}
 We use \emph{one-sided permutation and scaling} to refer to transformations
acting on either rows or columns. Since the two orientations are
equivalent up to transposition, e.g., $(AW)^\top=W^\top A^\top$, we use
the column-wise convention below for simplicity.
 Both multiplicative primitives share the form 
$\mathcal{O}(W) = WT$ and 
$\mathcal{R}(X\widetilde W) = X\widetilde W \, T^{-1}$, where $T$ 
is an invertible matrix sampled by the TEE. Their efficiency comes from the structure of $T^{-1}$, which 
reduces TEE-side recovery to operations far cheaper than a dense 
matrix multiplication.

A \emph{column-wise permutation} sets $T = \Pi$, an $n \times n$ 
permutation matrix sampled by the TEE. The recovery 
$X\widetilde W \, \Pi^{-1}$ reduces to permuting the columns of 
$X\widetilde W$ back to their original order, which the TEE 
performs as a memory copy with no arithmetic.

A \emph{column-wise scaling} sets $T = D = \mathrm{diag}(d_1, 
\dots, d_n)$, a non-singular diagonal matrix sampled by the TEE. 
The recovery $X\widetilde W \, D^{-1}$ reduces to dividing the 
$i$-th column of $X\widetilde W$ by $d_i$, which the TEE performs as $mn$ scalar divisions, at total cost of $O(mn)$.

\noindent\textbf{Existing Schemes as Compositions.}
Table~\ref{tab:related_works} maps the matrix-level weight transformations
of prior TSLP schemes onto these four primitives. Defining
\[
    \mathcal{B}_{\text{prior}} 
    \;:=\; \bigl\{\, \mathcal{P}_S,\;\;
                    \mathcal{P}_L,\;\;
                    \mathcal{P}_\Pi,\;\;
                    \mathcal{P}_D \,\bigr\},
\]
we observe that every weight transformation listed in
Table~\ref{tab:related_works} is an element of
$\mathcal{F}(\mathcal{B}_{\text{prior}})$ --- whether built on a single
primitive or, as in the case of ArrowCloak~\cite{wang2025arrows}, composed of several.

\begin{table*}[t]
  \caption{Representative TSLP schemes mapped onto the four 
  primitives of $\mathcal{B}_{\text{prior}}$. Six of the seven 
  schemes use a single primitive; only ArrowCloak composes 
  multiple primitives in one scheme. The \textbf{Mode} column 
  indicates whether the same weight matrix is obfuscated by a 
  single fixed mask (static) or by multiple masks across inferences 
  (dynamic).}
  \label{tab:related_works}
  \centering
  \small
  \setlength{\tabcolsep}{5pt}
  \begin{tabular}{@{}llcccccccc@{}}
    \toprule
    \textbf{Scheme} & \textbf{Venue} 
      & $\mathcal{P}_S$ & $\mathcal{P}_L$ 
      & $\mathcal{P}_\Pi$ & $\mathcal{P}_D$ 
      & \textbf{Mode}
      &  \textbf{Original transformation}
      & \textbf{$\mathcal{O}(W)$} 
      & \textbf{$\mathcal{R}(X\widetilde W)$} \\
    \midrule
    NNSplitter \cite{zhou2023nnsplitter} & ICML'23 
      & \ding{51} & & & 
      & static
      & $\widetilde W=W+S,\ \mathrm{nnz}(S)\ll n^2$
      & $W + S$ & $X\widetilde W - XS$ \\
    Magnitude \cite{hou2022model} & TDSC'21 
      & \ding{51} & & & 
      & static
      & $\widetilde W=W+S,\ \mathrm{nnz}(S)\approx0.01n^2$
      & $W + S$ & $X\widetilde W - XS$ \\
    LoRO \cite{xiong2025loro} & NeurIPS'25 
      & & \ding{51} & & 
      & dynamic
      & $\widetilde W_\ell=W_\ell+D_\ell^{\mathrm{LoRO}},\ \operatorname{rank}(D_\ell^{\mathrm{LoRO}})\ll n$
      & $W + AB^\top$ & $X\widetilde W - (XA)B^\top$ \\
    TransLinkGuard \cite{li2024translinkguard} & MM'24 
      & & & \ding{51} & 
      & static
      & $\widetilde W_{q,k,v}=\Pi_i^\top W_{q,k,v},\ \widetilde W_o=W_o\Pi_i$
      & $W\Pi$ & $X\widetilde W \Pi^{-1}$ \\
    Soter \cite{shen2022soter} & ATC'22 
      & & & & \ding{51} 
      & dynamic
      & $\widetilde W_i=\mu_iW_i$
      & $WD$ & $X\widetilde W D^{-1}$ \\
    TSQP \cite{sun2025tsqp} & S\&P'25 
      & & & & \ding{51} 
      & static
      & $\widetilde W=W_{\mathrm{dissim}}/s$
      & $WD$ & $X\widetilde W D^{-1}$ \\
    ArrowCloak \cite{wang2025arrows} & Security'25 
      & & \ding{51} & \ding{51} & \ding{51} 
      & dynamic
      & $\widetilde W=(WD_1+\mathbf{v}\mathbf{1}_n^\top D_2)\Pi$
      &  $(W+L)D\Pi$
      &  $X\widetilde W\Pi^{-1}D^{-1}-XL$ \\
    \bottomrule
  \end{tabular}
\end{table*}

Table~\ref{tab:related_works} gives a scheme-by-scheme correspondence between
each published weight transformation and its representation using our primitives.
The last two columns show the equivalent obfuscation and TEE-side recovery.
For TransLinkGuard~\cite{li2024translinkguard}, adjacent layers use different permutations, so the TEE
rearranges the output of one layer into the order expected by the next. For ArrowCloak~\cite{wang2025arrows},
setting $D=D_1$ and
$L=\mathbf{v}\mathbf{1}_n^\top D_2D_1^{-1}$ gives
$(WD_1+\mathbf{v}\mathbf{1}_n^\top D_2)\Pi=(W+L)D\Pi$.
Our abstraction captures the core weight-obfuscation and recovery algorithms
at the TEE--REE boundary, which are central to protecting offloaded weights.

 \subsection{Security Boundary of \texorpdfstring{$\mathcal{B}_{\text{prior}}$}{B\_prior}}
\label{sec:security_boundary}

\S~\ref{sec:prior_works} established that the matrix-level transformations listed in Table~\ref{tab:related_works} lie in
 $\mathcal{F}(\mathcal{B}_{\text{prior}})$. This motivates the following
question: does this family admit a ``most secure'' composition?

\begin{definition}[Security Boundary]
\label{def:security-boundary}
Let $\mathcal{B}$ be a base set of obfuscation primitives. A
scheme $\mathcal{O}^*$ is the \emph{security boundary} of
$\mathcal{B}$ if it satisfies the efficiency requirement in
Definition~\ref{def:obfuscation-primitive} and every
efficiency-preserving $\mathcal{P} \in \mathcal{F}(\mathcal{B})$
  can be realized by $\mathcal{O}^*$ under some choice of parameters.
 Therefore, $\mathcal{O}^*$ can be viewed as the ``most secure'' scheme
attainable from $\mathcal{B}$.
\end{definition}

\begin{theorem}[Security boundary of $\mathcal{B}_{\text{prior}}$]
\label{thm:omax}
The security boundary of $\mathcal{B}_{\text{prior}}$ is
\begin{equation}
    \mathcal{O}_{\text{prior}}(W;\; S, L, D, \Pi) 
    \;=\; (W + S + L)\, D\, \Pi,
    \label{eq:omax}
\end{equation}
where $S$, $L$, $D$, and $\Pi$ represents sparse, low-rank,
non-singular diagonal, and permutation matrices, respectively.
That is, every efficiency-preserving composition $\mathcal{P} \in 
\mathcal{F}(\mathcal{B}_{\text{prior}})$ is an instance of
$\mathcal{O}_{\text{prior}}$ for some $S$, $L$, $D$, and $\Pi$.
\end{theorem}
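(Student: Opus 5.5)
The plan is to show that every word in the generators of $\mathcal{F}(\mathcal{B}_{\text{prior}})$ can be rewritten into the normal form $(W+S+L)D\Pi$. I would argue by induction on the composition length $k$, using a small set of commutation (``push-through'') identities that move additive terms to the left of multiplicative ones while preserving their structural class. The relevant invariant is a class $\mathcal{C}$ of matrices $S$ and $L$, where $\nnz(S)$ and $\operatorname{rank}(L)$ are bounded by the efficiency budget. Since the efficiency budget is part of the definition of $\mathcal{F}$, I only need to check that these quantities grow at most additively, so that the normal form remains efficient.

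\emph{Base case.} Each primitive is an instance of $\mathcal{O}_{\text{prior}}$ with trivial parameters:
\begin{itemize}[leftmargin=2em]
  \item $\mathcal{P}_S$ is $(S,0,I,I)$;
  \item $\mathcal{P}_L$ is $(0,L,I,I)$;
  \item $\mathcal{P}_D$ is $(0,0,D,I)$;
  \item $\mathcal{P}_\Pi$ is $(0,0,I,\Pi)$.
\end{itemize}

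\emph{Inductive step.} Suppose a composition $\mathcal{P}$ of length $k$ satisfies $\mathcal{P}(W)=(W+S+L)D\Pi$. I then apply one more primitive and case-split on which one it is.
\begin{itemize}[leftmargin=2em]
  \item \textbf{Scaling.} Right-multiplying by $D'$ requires $\Pi D' = D''\Pi$ with $D''=\Pi D'\Pi^\top$ diagonal and nonsingular. This gives $(W+S+L)(DD'')\Pi$.
  \item \textbf{Permutation.} Right-multiplying by $\Pi'$ gives $(W+S+L)D(\Pi\Pi')$.
  \item \textbf{Sparse mask.} Adding $S'$ gives $(W+S+L)D\Pi+S' = (W+S+S'\Pi^\top D^{-1}+L)D\Pi$. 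Here $S'\Pi^\top D^{-1}$ is sparse with the same $\nnz$, because permuting and scaling columns preserves the support size.
  \item \textbf{Low-rank mask.} Adding $A'B'^\top$ similarly gives $L+A'(D^{-1}\Pi B')^\top$, whose rank is at most $r+r'$.
\end{itemize}
In each case the result is again of the form $(W+\hat S+\hat L)\hat D\hat\Pi$. I would also confirm that the recovery maps match: the composed $\mathcal{R}$ given by Theorem~\ref{thm:composition} is determined by correctness up to functional equality. Since $X\widetilde W$ determines the required correction, the canonical recovery $X\widetilde W\hat\Pi^{-1}\hat D^{-1}-X\hat S-X\hat L$ agrees with the chained recovery.

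\emph{Efficiency.} I would then check that $\mathcal{O}_{\text{prior}}$ itself meets the requirement of Definition~\ref{def:obfuscation-primitive}. Its cost is $O(m\cdot\nnz(\hat S)) + O(mnr) + O(mn)$, which stays $\ll C(XW)$ whenever $\nnz(\hat S)\ll n^2$ and $\hat r\ll n$. These inequalities are inherited from the efficiency constraint defining $\mathcal{F}(\mathcal{B}_{\text{prior}})$, since accumulated cost bounds the total $\nnz$ and the total rank.

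\emph{Main obstacle.} The hardest step is this last interplay between bookkeeping and efficiency. Pushing additive masks through $D^{-1}$ and $\Pi^\top$ is easy, but I must make sure the realized $\hat S,\hat L$ fall in the admissible ``sparse'' and ``low-rank'' classes that $\mathcal{O}_{\text{prior}}$ quantifies over. I would resolve this by taking the classes of $\mathcal{O}_{\text{prior}}$ to be defined exactly by the budget, $\nnz\ll n^2$ and $r\ll n$, and noting that additive accumulation respects it. A second subtlety concerns row-wise variants such as TransLinkGuard's $\Pi^\top W$. The paper's transposition convention reduces these to the column form, so I would state that reduction explicitly rather than treat two-sided words.
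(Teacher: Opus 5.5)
Your proposal is correct and is essentially the paper's argument: it relies on the same identities that make up the paper's rewriting rules, namely $\Pi D'=(\Pi D'\Pi^{-1})\Pi$, closure of diagonal and permutation matrices under products, and preservation of sparsity and rank under addition and under right-multiplication by $(D\Pi)^{-1}$. The difference is only organizational. You keep $(W+S+L)D\Pi$ as an inductive invariant and pull each new mask back through the current $(D\Pi)^{-1}$ immediately, whereas the paper first distributes all multiplicative factors, fuses the chain into $D^*\Pi^*$, merges the masks, and then pulls them back once via $(D^*\Pi^*)^{-1}$; in addition, your version makes explicit the $\operatorname{nnz}$/rank accounting and the efficiency of the canonical form, which the paper leaves implicit.
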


\begin{proof}
We show that any efficiency-preserving composition $\mathcal{P}$ of primitives in 
$\mathcal{B}_{\text{prior}}$ can be rewritten into the form of 
Eq.~\eqref{eq:omax} via a sequence of algebraic identities.
Represent $\mathcal{P}$ as a sequence of primitives drawn
from $\{S, L, D, \Pi\}$ and applied to $W$. The following rewriting rules reduce any such 
word to canonical form:
\begin{align}
  D_1 D_2 &\;\longrightarrow\; D_{12}
  &&\text{diagonal closed under product}
  \label{eq:rw-dd}\\
  \Pi_1 \Pi_2 &\;\longrightarrow\; \Pi_{12}
  &&\text{perm.\ closed under product}
  \label{eq:rw-pp}\\
  \Pi\, D &\;\longrightarrow\; D'\, \Pi
  &&D' = \Pi D \Pi^{-1} \text{ diagonal}
  \label{eq:rw-pid}\\
  (W + L)\,M &\;\longrightarrow\; W M + L'
  &&M \in \{D, \Pi\},\; L' \text{ stays low-rank}
  \label{eq:rw-lm}\\
  (W + S)\,M &\;\longrightarrow\; W M + S'
  &&M \in \{D, \Pi\},\; S' \text{ stays sparse}
  \label{eq:rw-sm}\\
  L_1 + L_2 &\;\longrightarrow\; L_{12}
  &&\mathrm{rank}\,L_{12} \leq \mathrm{rank}\,L_1 + \mathrm{rank}\,L_2
  \label{eq:rw-ll}\\
  S_1 + S_2 &\;\longrightarrow\; S_{12}
  &&\nnz S_{12} \leq \nnz S_1 + \nnz S_2
  \label{eq:rw-ss}
\end{align}
We apply these rules in three phases.
First, repeated application of Rules~\eqref{eq:rw-lm} 
and~\eqref{eq:rw-sm} distributes every multiplicative primitive 
through the enclosing additive structure, yielding
\[
    W\, M_1 M_2 \cdots M_k 
    \;+\; \sum_i L_i \;+\; \sum_j S_j,
\]
where each $M_t \in \{D, \Pi\}$.
Second, Rules~\eqref{eq:rw-dd}, \eqref{eq:rw-pp}, 
and~\eqref{eq:rw-pid} reduce the multiplicative chain 
$M_1 M_2 \cdots M_k$ to $D^* \Pi^*$ (Rule~\eqref{eq:rw-pid} pushes 
permutations rightward; Rules~\eqref{eq:rw-dd} 
and~\eqref{eq:rw-pp} fuse adjacent diagonals and adjacent 
permutations).
Finally, Rules~\eqref{eq:rw-ll} and~\eqref{eq:rw-ss} fuse the 
additive masks of the same type, giving
\[
    \mathcal{P}(W) \;=\; W D^* \Pi^* + L^* + S^*.
\]
Setting $\widehat{L} = L^* (D^* \Pi^*)^{-1}$ and 
$\widehat{S} = S^* (D^* \Pi^*)^{-1}$, both still low-rank and 
sparse respectively, we obtain
\[
    \mathcal{P}(W) \;=\; (W + \widehat{S} + \widehat{L})\, D^* \Pi^* 
    \;=\; \mathcal{O}_{\text{prior}}(W;\, \widehat{S}, \widehat{L}, 
    D^*, \Pi^*),
\]
which is the canonical form of Eq.~\eqref{eq:omax}.
\end{proof}

Every efficiency-preserving composition built from
$\mathcal{B}_{\text{prior}}$ is an instance of $\mathcal{O}_{\text{prior}}$.
Whether $\mathcal{O}_{\text{prior}}$ offers adequate model-extraction protection is the question we take up in \S~\ref{sec:attack}.

 \section{Attack Intuition}
\label{sec:attack-insight}\label{sec:attack_insight}\label{sec:attack_overview}\label{sec:leaks}

Before presenting \sysattack in detail (\S~\ref{sec:attack}), we first
explain \emph{why} $\mathcal{O}_\textsf{prior}$ is vulnerable. The
high-level intuition is that no matter how the four primitives
$\{\Pi, D, S, L\}$ are composed, the \emph{column directions} of the
victim weight $W_\textsf{vic}$ remain largely exposed. We articulate
this as three insights, each pointing to a structural limitation of one
or more primitives.

\begin{insightbox}{Insight 1 (Column atomicity).}
Each victim column survives in $\widetilde{W}$ as a transformed copy of
a single original column. $\Pi$ and $D$ relabel and rescale columns but
never mix them.
\end{insightbox}

\noindent
$\Pi$ permutes columns and $D$ scales them entry-wise per column;
neither operation forms linear combinations \emph{across} columns.
Hence, the set of column directions of $W_\textsf{vic}$ survives
the perturbations induced by $\Pi$ and $D$.

\begin{insightbox}{Insight 2 (Sparse coverage gap).}
A vanishing fraction of entries are perturbed; the rest leak in
cleartext. Sparse mask $S$ has $\mathrm{nnz} \ll n^{2}$ nonzeros.
\end{insightbox}

\noindent
Practical deployments keep $S$ extremely sparse for efficiency, so
nearly every entry of each column of $W_\textsf{vic}$ is released
verbatim. A column direction is a global property of all $n$ entries,
and corrupting only a small fraction of them barely tilts it.

\begin{insightbox}{Insight 3 (Low-rank coverage gap).}
Each column's $(n{-}r)$-dim orthogonal component leaks in cleartext.
Low-rank mask $L = AB^{\top}$ confines noise to an $r$-dim subspace.
\end{insightbox}

\noindent
Because $L$ has rank at most $r \ll n$, every column of $L$ lives in a
single $r$-dimensional subspace. Projecting away from this subspace
removes $L$ entirely and exposes the remaining $(n{-}r)$ dimensions of
each victim column without any noise at all.

\parab{Why direction leakage suffices.}
$\mathcal{O}_\textsf{prior}$ fails to hide the column directions of
$W_\textsf{vic}$. Since the backbone architecture is public, building
$\mathcal{M}_\textsf{sur}$ reduces to recovering the weight matrices. Once the
column directions are obtained, the attacker only needs to fix them
and fine-tune the column magnitudes to build $\mathcal{M}_\textsf{sur}$.
\S~\ref{sec:collapse} develops the attack.
\section{The \sysattack Pipeline}
\label{sec:collapse}\label{sec:attack}\label{sec:our-attack}

\begin{figure*}[!t]
    \centering
    \includegraphics[width=0.95\textwidth]{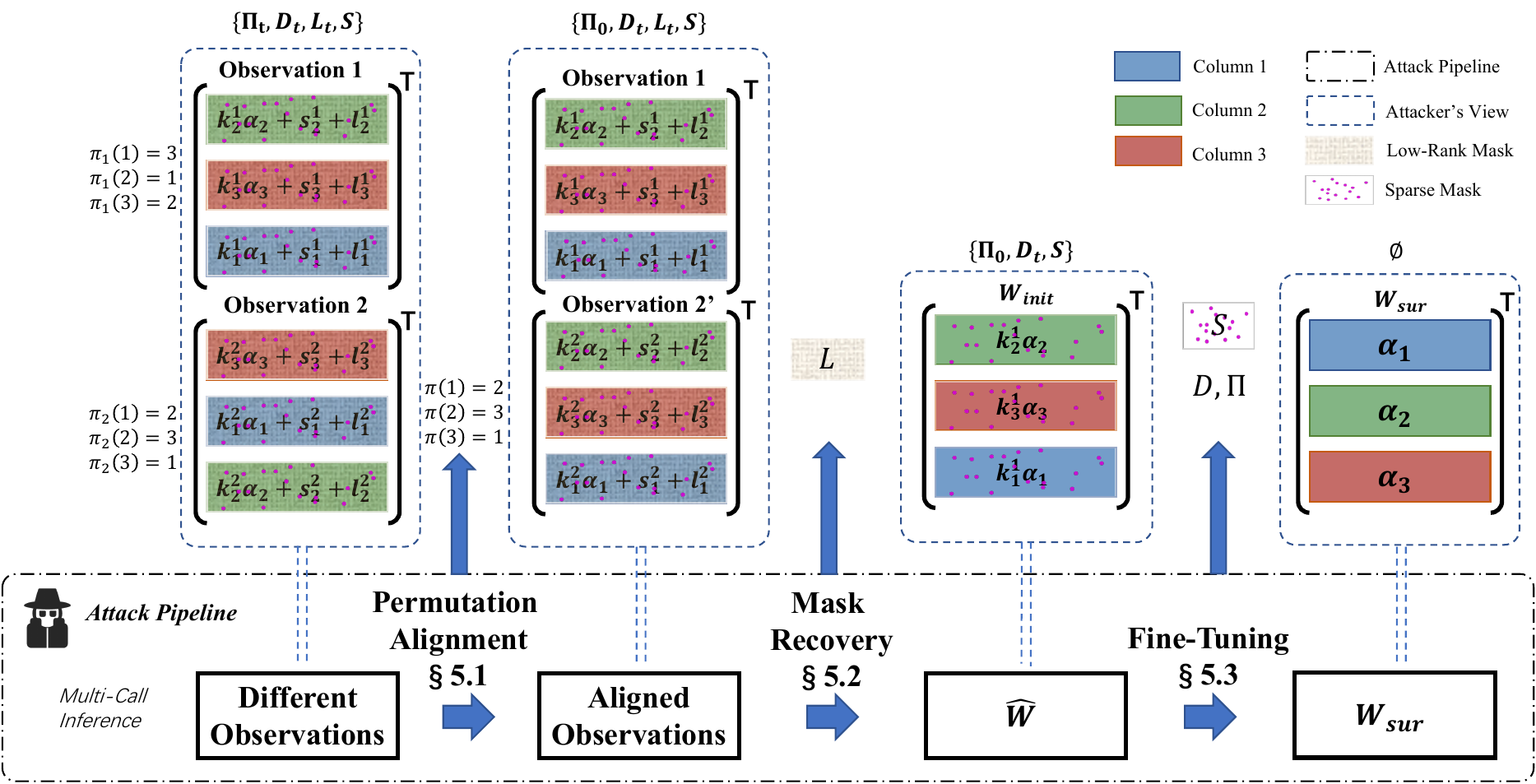}
    \Description{Three-stage pipeline diagram. Multiple obfuscated observations enter Stage 1, which computes dominant-subspace projections and recovers a per-round permutation to produce column-aligned views; these feed Stage 2, which identifies paired mask subspaces through intersections of column spans and jointly solves for relative column scales and mask components to subtract the low-rank mask; the output feeds Stage 3, which matches columns against the public prior, flags sparse support via residual voting, and calibrates per-column lengths before returning the surrogate weight.}
    \caption{\sysattack against \priorboundary: align columns (Stage~1), separate the low-rank mask (Stage~2), fine-tuning the length of columns and recover sparse mask (Stage~3).}
    \label{fig:pipeline}
\end{figure*}

\S~\ref{sec:leaks} has shown that $\mathcal{O}_{\text{prior}}$ leaks
$W$ along three structural channels: column directions survives the perturbations by $\Pi$ and $D$, most entries survives $S$, and most directions survives
 $L$. This section turns those leaks into an executable
model-extraction attack, which we call \emph{Collapse}.
Specifically, the attacker observes $T$ obfuscated views of a single weight matrix,
\begin{equation}
\widetilde{W}_t \;=\; (W + S + L_t)\,D_t\,\Pi_t \;\in\; \mathbb{R}^{m\times n},
\quad t = 1,\ldots,T,
\label{eq:view}
\end{equation}
with a static $S$ and dynamically-refreshed $L_t$, $D_t$, and $\Pi_t$. The attacker additionally has
access to a public pretrained weight $W_{\mathrm{pre}}\in\mathbb{R}^{m\times n}$
identified by model fingerprinting, and to a labeled task-data budget
$\mathcal{D}_{\mathrm{task}}$ of size at most $1\%$ of the original
training set. The output is a surrogate model $\mathcal{M}_\textsf{sur}$ whose
downstream accuracy approximates the victim $\mathcal{M}_\textsf{vic}$.

Figure~\ref{fig:pipeline} summarizes the pipeline. \emph{Stage~1}
(\S~\ref{sec:stage1}) consumes view-to-view consistency to discharge
$\{\Pi_t\}$, returning $K\geq 4$ views in a common column ordering.
 Stage~1 treats $W+S$ as a whole and uses cross-view consistency to align the relative permutations $\{\Pi_t\}$.\ 
\emph{Stage~2} (\S~\ref{sec:stage2}) uses mask-subspace intersections from four aligned views to
discharge $\{L_t\}$, producing an anchor-ordered estimate $\widehat{W+S}$. \emph{Stage~3} (\S~\ref{sec:stage3}) consumes
the prior $W_{\mathrm{pre}}$ and the task-data budget to discharge the
column scaling $\{D_t\}$, the unknown column ordering, and any
residual error, returning $\mathcal{M}_\textsf{sur}$. Stages~1 and~2 run independently per weight matrix; Stage~3 runs
once on the assembled model.

\subsection{Stage 1: Cross-View Column Alignment}
\label{sec:stage1}

Given the $T$ raw views, we extract $K\geq 4$ that share a common column
ordering. The recovered ordering is \emph{relative}: for every accepted
pair $(a,b)$ we obtain $\Pi_a \Pi_b^{-1}$ exactly, while the absolute
$\Pi_t$ remain unknown.  We treat $W+S$ as a whole and rely on the linear dependency detection (LDD) oracle of \S~\ref{sec:ldd}, an algebraic test that decides whether two column subsets of $\widetilde{W}_a$ and
$\widetilde{W}_b$ correspond to the same victim columns.
We then call this oracle repeatedly into an
explicit recovery of $\Pi_a \Pi_b^{-1}$.

\subsubsection{The LDD Oracle}
\label{sec:ldd}

For column subsets $I_a, I_b \subset [n]$ with $|I_a| = |I_b| = k$, stack
the corresponding columns into
\[
C \;=\; \bigl[\,\widetilde{W}_a[:, I_a] \;\big|\; \widetilde{W}_b[:, I_b]\,\bigr] \;\in\; \mathbb{R}^{m\times 2k},
\]
and let $J_a = \Pi_a^{-1}(I_a)$, $J_b = \Pi_b^{-1}(I_b)$ be the
corresponding indices in the shared matrix $W+S$. The dimension of $C$'s column space
is determined by to what extent $J_a$ and $J_b$ overlap: 

\begin{itemize}
\item \textbf{Full overlap} ($J_a = J_b$). Both blocks span the same $k$
victim directions, and $L_a, L_b$ each contribute a rank-$r$ subspace
generic with respect to $\colop(W+S)$. The columns of $C$ admit
\[
N_{\mathrm{full}} \;=\; k + 2r
\]

linearly independent vectors among them.
\item \textbf{Partial overlap} ($|J_a\cup J_b| = k+s$, $s\geq 1$). The
victim-direction count rises to $k+s$, and $L_a, L_b$ contribute a
further $2r$, so the columns admit
\[
N_{\mathrm{part}} \;=\; k + s + 2r
\]
linearly independent vectors.
\end{itemize}

\parab{Choice of $k$.} 
We use the rank $\rho$ of $C$ as our estimate of the number of
linearly independent columns it contains. The parameter $k$ then
controls how large an overlap deficit $s$ we can detect: because $C$
has only $2k$ columns, its rank is bounded by $\rho \leq 2k$, so the
identity $\rho = N_{\mathrm{part}} = k + s + 2r$ holds only while
$k + s + 2r \leq 2k$, i.e.\ $s \leq k - 2r$. Beyond this threshold, $\rho$ saturates at $2k$, making different
values of $s$ indistinguishable. For instance, taking $k = 32$ when $r = 2$ yields
a detection range of $s \leq 28$, so every overlap situation with
$|J_a \cap J_b| \in \{4, 5, \dots, 32\}$ is resolved exactly from
$\rho$. Thus even a small overlap is immediately visible in $\rho$.

\begin{algorithm}[t]
\caption{LDD Oracle}
\label{alg:ldd}

\KwIn{$A, B \in \mathbb{R}^{m\times k}$, rank threshold $\tau$}
\KwOut{Numerical rank $\rho$}

$C \leftarrow [\,A \mid B\,]$\;

$\rho \leftarrow \textsc{Rank}(C,\, \tau)$\;
\Return $\rho$\;

\end{algorithm}

\subsubsection{Recovering Relative Permutations}
\label{sec:permrec}

The LDD oracle returns only the size of the overlap between two
index sets $J_a$ and $J_b$, without establishing a one-to-one
correspondence between them. In this subsection, we build
an alignment algorithm on top of the oracle that recovers the full
relative permutation $\Pi_a \Pi_b^{-1}$ between two views. The
algorithm proceeds in three phases. We first
\emph{seed} the alignment by locating a pair of size-$k$ column
subsets that the oracle certifies as fully overlapping. We then
\emph{disambiguate} the bijection inside this seed by removing one
column from each side at a time and querying the oracle. Finally we
\emph{propagate} the bijection to all remaining columns one at a
time.

\parab{Phase 1: seeding.}
A \emph{seed} is a pair of size-$k$ column subsets $I_a, I_b$ that
fully overlap, i.e., $|J_a \cap J_b| = k$. Our strategy is to first
find $I_a$ and $I_b$ that partially overlap, then grow this overlap
until it is complete. The first step is essentially free: even when
$I_a$ and $I_b$ are drawn uniformly at random, the LDD oracle
reports the overlap $|J_a \cap J_b|$ with high probability, giving
us a partial overlap to start from. Next, we grow the overlap by replacing one column of $I_b$ at a time with a new column from $[n] \setminus I_b$, keeping only those replacements that increase $|J_a \cap J_b|$. We continue this process until $|J_a \cap J_b| = k$, at which point $(I_a, I_b)$ is a seed.

\parab{Phase 2: disambiguating.} A seed gives us two fully overlapping sets, but we still need the element-level mapping between them. To recover it, we remove one column from each of $I_a$ and $I_b$ and ask the LDD oracle whether the remaining $k - 1$ columns on each side still fully overlap. If they do, the two removed columns must correspond to the same underlying column. Repeating this test over all candidate pairs gradually determines the full correspondence between $I_a$ and $I_b$.

\parab{Phase 3: propagating.}
Following the phase 1 and phase 2, a one-to-one correspondence has been established between the $k$ columns of a seed pair $(I_a, I_b)$, and it remains to extend this correspondence to all columns. A natural starting point is to match the columns one at a time. We append an unmatched column to each side and query the LDD oracle. If the two sides remain fully overlapping, the appended columns are matched. By repeating this procedure until no unmatched columns remain, the full correspondence is eventually recovered.

\begin{algorithm}[t]
\newcommand{\BlueAlgorithmComment}[1]{{\color{black}\texttt{#1}}}\SetCommentSty{BlueAlgorithmComment}
\caption{Permutation Alignment via the LDD Oracle}
\label{alg:permrec}
\KwIn{Views $\widetilde{W}_a, \widetilde{W}_b \in \mathbb{R}^{m\times n}$; subset size $k$}
\KwOut{Bijection $\pi : [n] \to [n]$ matching columns of $\widetilde{W}_b$ to those of $\widetilde{W}_a$}

\tcp{$\textnormal{\scshape Ovl}(A, B) :=$ number of shared victim columns between $A$ and $B$, read off the rank returned by the LDD oracle.}

\tcc{Phase 1 (seeding): grow a random pair until it fully overlaps.}
draw $I_a, I_b \subset [n]$ uniformly at random with $|I_a| = |I_b| = k$\;

\tcp{When $k^2/n$ comfortably exceeds $2r$, random blocks likely satisfy $\textnormal{\scshape Ovl}>2r$; otherwise resample.\protect\footnotemark}

\While{$\textnormal{\scshape Ovl}(\widetilde{W}_a[:, I_a],\, \widetilde{W}_b[:, I_b]) < k$}{
    pick $y \in I_b$ and $y' \in [n]\setminus I_b$ such that swapping $y \mapsto y'$ in $I_b$ strictly increases $\textnormal{\scshape Ovl}$; apply it\;
}
\tcp{$(I_a, I_b)$ now index the same $k$ victim columns, in unknown order.}
\BlankLine
\tcc{Phase 2 (disambiguating): match the $k$ seed columns one-to-one.}
$\pi \leftarrow \emptyset$;\ $U_b \leftarrow I_b$\;
\For{$x \in I_a$}{
    \For{$y \in U_b$}{
        \If{$\textnormal{\scshape Ovl}(\widetilde{W}_a[:, I_a\setminus\{x\}],\, \widetilde{W}_b[:, I_b\setminus\{y\}]) = k-1$}{
            $\pi(x) \leftarrow y$;\ $U_b \leftarrow U_b \setminus \{y\}$;\ \textbf{break}\;
            \tcp{removing $x$ on the left and $y$ on the right preserves full overlap iff $\pi(x)=y$.}
        }
    }
}
\tcp{$\pi$ now bijects $I_a$ to $I_b$ on the seed; the remaining $n-k$ columns are still unmatched.}
\BlankLine
\tcc{Phase 3 (propagating): extend $\pi$ to the remaining $n-k$ columns.}
$U_a \leftarrow [n] \setminus I_a$;\ $U_b \leftarrow [n] \setminus I_b$\;
\While{$U_a \neq \emptyset$}{
    pick any $x \in U_a$\;
    \For{$y \in U_b$}{
        \If{$\textnormal{\scshape Ovl}(\widetilde{W}_a[:, I_a \cup \{x\}],\, \widetilde{W}_b[:, I_b \cup \{y\}]) = k+1$}{
            $\pi(x) \leftarrow y$;\ $U_a \leftarrow U_a\setminus\{x\}$;\ $U_b \leftarrow U_b\setminus\{y\}$;\ \textbf{break}\;
        }
    }
}
\Return $\pi$\;
\end{algorithm}
\footnotetext{In the BERT-Base setting~\cite{devlin2019bert} ($n=m=768$ and $r=2$), choosing $k=80$ gives $\Pr[\textnormal{\scshape Ovl}>2r]\approx93.92\%$ for one random block pair.}
\parab{Algorithm}
Algorithm~\ref{alg:permrec} assembles the three phases into a single
procedure that recovers $\pi$ from two raw views. The cost is
dominated by Phase 3, which makes $O(n^2)$ LDD calls; Phase 1 and
Phase 2 together contribute only $O(k^2)$ calls.

\subsection{Stage 2: Mask Recovery}
\label{sec:stage2}\label{sec:attack_stage2}

We select four aligned views from Stage~1. Working in their common
column order and suppressing the shared permutation $\Pi_0$, we write
\begin{equation}
\widetilde{W}_t = (W + S + L_t)D_t,
\qquad
t = 0,1,2,3.
\label{eq:stage2-input}
\end{equation}
Let $H=W+S$ and $d_{tj}=D_t[j,j]\ne0$. Our goal is
$\hat W=HD_0$: Stage~2 removes the low-rank masks while
leaving the shared sparse mask and anchor-view scaling to Stage~3.

\parab{Step 1: Identify mask subspaces.}
Choose disjoint column blocks $I,J\subseteq[n]$, each of size $2r$.
For each view pair $(a,b)\in\{(0,1),(2,3)\}$, concatenate the
selected columns from both views within each block, then take the
intersection of the resulting column spaces:
\begin{equation}
\begin{aligned}
\mathcal V_{ab}(I)
&=\colop[\widetilde W_a[:,I],\widetilde W_b[:,I]],\\
\mathcal V_{ab}(J)
&=\colop[\widetilde W_a[:,J],\widetilde W_b[:,J]],\\
\mathcal U_{ab}
&=\mathcal V_{ab}(I)\cap\mathcal V_{ab}(J).
\end{aligned}
\label{eq:stage2-intersection}
\end{equation}
Here $\widetilde W_t[:,X]$ selects the columns indexed by $X$,
$[A,B]$ denotes horizontal concatenation, and $\colop(M)$ is the
space spanned by the columns of $M$.
The column spaces $\mathcal V_{ab}(I)$ and $\mathcal V_{ab}(J)$
share exactly the $2r$ independent mask directions spanned by $L_a$
and $L_b$. Their intersection therefore extracts
$\mathcal U_{ab}=\colop[L_a,L_b]$.\footnote{We assume that the
$2r$ basis directions of $L_a,L_b$ and the $4r$ selected columns of
\mbox{$H[:,I],H[:,J]$} are jointly linearly independent, and that each
difference block \mbox{$(L_a-L_b)[:,X]$} spans all $2r$ mask directions
for $X\in\{I,J\}$. In BERT-Base ($m=n=768$, $r=2$), these
conditions require only $6r=12$ independent directions in
$\mathbb R^{768}$.}
Having identified $\mathcal U_{01}$ and $\mathcal U_{23}$, we use
SVD to obtain their orthonormal bases
$B_{01},B_{23}\in\mathbb R^{m\times2r}$.

\parab{Step 2: Recover the shared columns.}
Define $P_{ab}=I_m-B_{ab}B_{ab}^{\top}$, the orthogonal projector
onto $\mathcal U_{ab}^{\perp}$. The key observation is that the
low-rank masks in each view pair affect only components within
$\mathcal U_{ab}$, leaving its orthogonal complement unaffected:
\[
P_{ab}\widetilde W_t=P_{ab}HD_t,\qquad t\in\{a,b\}.
\]
We have $\mathcal U_{01}^{\perp}+\mathcal U_{23}^{\perp}=\mathbb R^m$,
so the two mask-free projections jointly provide information spanning
the entire space.\footnote{For independent mask directions drawn from
full-dimensional continuous distributions in $\mathbb R^m$,
$\mathcal U_{01}\cap\mathcal U_{23}=\{0\}$ almost surely when
$4r\le m$.}
However, these projections retain different unknown
column scales, requiring joint recovery of the relative scales and
mask components. Specifically, for each column $j$, let
$y_{tj}=\widetilde W_t[:,j]$ and solve
\begin{equation}
\begin{bmatrix}B_{01}&-B_{23}&y_{2j}\end{bmatrix}
\begin{bmatrix}\alpha_j\\\beta_j\\\gamma_j\end{bmatrix}=y_{0j},
\label{eq:stage2-joint}
\end{equation}
where $\alpha_j,\beta_j\in\mathbb R^{2r}$ are mask coordinates and
$\gamma_j\in\mathbb R$ is the relative scale.
Solving the system recovers the mask component
$B_{01}\alpha_j=d_{0j}L_0[:,j]$.\footnote{When $4r\ll m$, the coefficient matrix $[B_{01},-B_{23},y_{2j}]$
has full column rank $4r+1$ with overwhelming probability.}
Subtracting the mask then yields
\begin{equation}
\hat W[:,j]=y_{0j}-B_{01}\alpha_j=H[:,j]d_{0j}.
\label{eq:stage2-output}
\end{equation}
The recovered $\hat W=(W+S)D_0$ is passed to Stage~3.

\subsection{Stage 3: Fine-tuning}
\label{sec::attack_stage3}\label{sec:stage3}

Stage~3 takes the anchor-ordered estimate $\hat{W}=(W+S)D_0$ and the public prior $W_{\mathrm{pre}}$ and turns it into a surrogate model $\mathcal{M}_\textsf{sur}$. To do so, we first align the recovered columns with $W_{\mathrm{pre}}$, then estimate a candidate sparse support, and finally calibrate the column lengths and fine-tune the resulting initialization.

\parab{Step 1: direction recovery.}
This step recovers the column permutation $\Pi_0$ by exploiting direction similarity between $W_{\mathrm{pre}}$ and $\hat{W}$~\cite{wang2025arrows}. We find an index mapping $\sigma(\cdot)$, the inverse of $\Pi_0$, that maps each column of $\hat{W}$ back to its position in the column order of $W_{\mathrm{pre}}$. We use cosine similarity to quantify directional similarity: at each step, the attacker fixes a column in $\hat{W}$ and, within $W_{\mathrm{pre}}$, selects the vector with the largest cosine similarity to it: 
\begin{equation}
\sigma(i) = \operatorname{argmax}_{j} \cos(\hat{w}^i, w^j_{\mathrm{pre}}).
\end{equation}
Since each column of $\hat{W}$ has only a small fraction of entries perturbed by $S$, the directional similarity between $\hat{W}$ and $W_{\mathrm{pre}}$ is preserved.

\parab{Step 2: sparse-support estimation.}
After aligning the recovered columns with $W_{\mathrm{pre}}$, we first estimate the remaining scale of each column by least-squares projection:
\begin{equation}
a_j
=
\frac{\left\langle \hat W[:,j],W_{\mathrm{pre}}[:,j]\right\rangle}
{\left\|W_{\mathrm{pre}}[:,j]\right\|_2^2}.
\end{equation}
Using the fitted scale, we measure the normalized residual of each entry as
\begin{equation}
R_{ij}
=
\frac{\left|\hat W_{ij}-a_jW_{\mathrm{pre},ij}\right|}
{\sqrt{u_i v_j}}.
\end{equation}
where $u_i$ and $v_j$ are the typical residual magnitudes of row $i$ and column $j$. We then select the candidate sparse support by
\begin{equation}
\hat\Omega=\operatorname{Top}_{q}(R).
\end{equation}
Here, $\operatorname{Top}_{q}$ returns the locations of the largest $q$ fraction of entries within each layer.

\parab{Step 3: fine-tuning.}
This step undoes the effect of the per-column scaling in $D_0$ by estimating its diagonal entries $s_i$. We compare the length between column vector $\hat{w}^i$ and $w^{\sigma(i)}_{\mathrm{pre}}$ to recover an approximation of $s_i$. Let $l(\cdot)$ be a length function of vector. The approximation of $s_i$ is computed as $\hat{s}_i = l(w^{\sigma(i)}_{\mathrm{pre}})/l(\hat{w}^i)$. 
We then initialize the approximate vector as $w^{\sigma(i)}_{\mathrm{init}} = \hat{s}_i \cdot \hat{w}^i$, and construct the initialization weight matrix $W_{\mathrm{init}} = [w^{\sigma(1)}_{\mathrm{init}},\allowbreak w^{\sigma(2)}_{\mathrm{init}},\allowbreak \ldots,\allowbreak w^{\sigma(n)}_{\mathrm{init}}]$. We then construct $W_{\mathrm{init}}$ for each layer to get the initialized model $\mathcal{M}_\textsf{init}$. The entries in the candidate support $\hat\Omega$ are initialized from $W_{\mathrm{pre}}$, after which we fine-tune $\mathcal{M}_\textsf{init}$ on $\mathcal{D}_{\mathrm{task}}$ to obtain $\mathcal{M}_\textsf{sur}$.
\section{Extending the Security Boundary}
\label{sec:defense}\label{sec:our-defense}

The leakage exposed by Collapse traces to a single invariant of
\priorboundary: \emph{the column directions of $W$
survive the obfuscation}. None of the primitives in $\mathcal{B}_{\mathrm{prior}}$ removes
this invariant. To push the boundary forward, the base set itself
must be enlarged with primitives that obfuscate column (and,
symmetrically, row) directions directly. \S~\ref{sec:defense:mix}
introduces the sparse multiplicative primitive
$\mathcal{P}_{M}$ that mixes columns of $W$, and
\S~\ref{sec:defense:ds} introduces the double-sided multiplicative
primitive $\mathcal{P}_{DS}$ that simultaneously mixes rows.
\S~\ref{sec:defense:boundary} writes down the canonical form
\extboundary of the extended family, and proves it is the new security boundary.

\subsection{Sparse Multiplicative Primitive}
\label{sec:defense:mix}

We break column atomicity by making each exposed column a sparse
combination of several victim columns, instead of a transformed copy
of one. 
Specifically, We seek a sparse, invertible multiplicative matrix whose inverse is
also sparse. A natural family is the sum of a constant number $k$
of permutation-diagonal terms,
\[
  M \;=\; \sum_{i=1}^{k} P_i D_i,
\]
where each $P_i$ is a permutation and each $D_i$ a non-singular
diagonal. Sampled appropriately, both $M$ and $M^{-1}$ admit the
same form, so each has $O(kn) = O(n)$ nonzero entries. The
$k=2$ case is sufficient to break column atomicity, and we use it
in our experiments. A concrete instantiation is given in
Appendix~\ref{app:111}. We refer to any matrix of this form as a \emph{generalized mixing matrix}.
The obfuscation primitive is defined as follows 
\[
  \mathcal{O}_W(W) \;=\; W M,
  \qquad
  \mathcal{R}(X \widetilde{W}) \;=\; (X \widetilde{W})\, M^{-1}.
\]
Recovery applies $M^{-1}$ as two paired permute-and-scale passes
over the columns of $X\widetilde{W}$ followed by a sum, at total cost
$O(mn)$---the same as $\mathcal{P}_D$.

\parab{Remark.}
The prior multiplicative primitives are degenerate cases of this
construction: setting $P_1 D_1 = 0$ gives $\mathcal{P}_D$ (with $P_0 = I$) and $\mathcal{P}_\Pi$ (with $D_0 = I$). So
$\mathcal{P}_{M}$ strictly subsumes
$\{\mathcal{P}_D, \mathcal{P}_\Pi\}$.

\subsection{Double-Sided Multiplicative Primitive}
\label{sec:defense:ds}

$\mathcal{P}_{M}$ mixes columns but leaves the row basis of
$X\widetilde{W}$ identical to that of $XW$; we close this attack channel by
mixing rows as well. Specifically, let $M_X, M_W$ be two independently sampled generalized mixing
matrices. The primitive is define as 
\begin{equation*}
\begin{aligned}
  \mathcal{O}_X(X) \;&=\; X\,M_X, \\
  \mathcal{O}_W(W) \;&=\; M_X^{-1}\,W\,M_W, \\
  \mathcal{R}(\widetilde{X}\widetilde{W}) \;&=\; (\widetilde{X}\widetilde{W})\,M_W^{-1}.
\end{aligned}
\end{equation*}
Correctness follows from
$\widetilde{X}\widetilde{W} = (X M_X)(M_X^{-1} W M_W) = X W M_W$,
which $\mathcal{R}$ recovers by stripping the trailing $M_W$; each
of the four sparse multiplications costs $O(mn)$.

\parab{Extending Definition~\ref{def:primitive}.}
The primitives in $\mathcal{B}_{\mathrm{prior}}$ and $\mathcal{P}_{M}$
all act on $W$ alone, fitting the pair $(\mathcal{O}_W, \mathcal{R})$
of Definition~\ref{def:primitive}. To accommodate a primitive that
also transforms $X$, we widen the definition to a triple
$(\mathcal{O}_X, \mathcal{O}_W, \mathcal{R})$ with the correctness
requirement $\mathcal{R}(\widetilde{X}\widetilde{W}) = XW$. Every
primitive in $\mathcal{B}_{\mathrm{prior}} \cup \{\mathcal{P}_{M}\}$
is the special case where $\mathcal{O}_X = \mathrm{id}$, so the extension
is strictly backwards-compatible, and the closure-under-composition
property of Theorem~\ref{thm:composition} carries over by stacking
$\mathcal{O}_X$ maps on the input side and $\mathcal{O}_W$ maps on
the weight side.

\parab{Remark.}
$\mathcal{P}_{DS}$ composes with any right-acting multiplicative
primitive: stacking $\mathcal{P}_{DS}$ around an existing
$\mathcal{P}_M, \mathcal{P}_D$, or $\mathcal{P}_\Pi$ yields a
double-sided variant of that primitive without any further
modification. In other words, every column-wise multiplicative
primitive admits a two-sided lift through $\mathcal{P}_{DS}$.

\subsection{The Extended Boundary \texorpdfstring{\extboundary}{O\_ext}}
\label{sec:defense:boundary}

Let $\mathcal{B}_{\mathrm{ext}}
  \;:=\;
  \mathcal{B}_{\mathrm{prior}} \,\cup\, \{\mathcal{P}_{M}, \mathcal{P}_{DS}\}$
and let $\mathcal{F}(\mathcal{B}_{\mathrm{ext}})$ be the family it
generates through (extended) composition. We claim the boundary of
this family is
\begin{equation}
\label{eq:oext}
\begin{aligned}
  \mathcal{O}_X(X) \;&=\; X\,M_X, \\
  \mathcal{O}_W(W) \;&=\; M_X^{-1}\,(W + L + S)\,M_W,
\end{aligned}
\end{equation}
where $S$ ranges over sparse matrices, $L$ over rank-$r$ matrices,
and $M_X, M_W$ over generalized mixing matrices. Setting $M_X = I$
and $M_W = D\Pi$ recovers \priorboundary, so \extboundary strictly subsumes \priorboundary.

\begin{theorem}[Security boundary of $\mathcal{B}_{\mathrm{ext}}$]
\label{thm:boundary-ext}
Every scheme\newline $\mathcal{P} \in \mathcal{F}(\mathcal{B}_{\mathrm{ext}})$
can be realized as
$\mathcal{O}_\textsf{ext}(\,\cdot\,;\,S, L, M_X, M_W)$ for some
choice of parameters.
 Here the additive masks satisfy
$\operatorname{nnz}(S) \leq s_{\max} \ll n^2$ and
$\operatorname{rank}(L) \leq r_{\max} \ll n$.
\end{theorem}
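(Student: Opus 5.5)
We mirror the rewriting argument used for Theorem~\ref{thm:omax}, now over the enlarged alphabet $\{S, L, D, \Pi, M, (M_X, M_W)\}$. Represent an arbitrary $\mathcal{P} \in \mathcal{F}(\mathcal{B}_{\mathrm{ext}})$ as a word of primitives applied in order. Each letter acts on the pair $(X\text{-side}, W\text{-side})$. Additive letters map $W \mapsto W + R$. Right-multiplicative letters ($D$, $\Pi$, $M$) map $W \mapsto W T$. A $\mathcal{P}_{DS}$ letter maps $X \mapsto X M_X$ and $W \mapsto M_X^{-1} W M_W$.

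The first step is to record closure facts for generalized mixing matrices. A product of two matrices of the form $\sum_{i\le k} P_i D_i$ is again of this form, with at most $k k'$ terms: use $D P = P (P^{-1} D P)$ and note that $P^{-1} D P$ is diagonal. Diagonals and permutations are the one-term case, as in the Remark after $\mathcal{P}_M$. Hence every right-acting chain collapses to a single generalized mixing matrix $M_W^*$. Every chain of left factors $M_{X,1}^{-1} M_{X,2}^{-1}\cdots$ collapses likewise to $(M_X^*)^{-1}$ with $M_X^* = \cdots M_{X,2} M_{X,1}$, and the $X$-side stack is exactly $X M_X^*$. The term count $k$ stays constant under this collapse because the efficiency constraint $C_{\mathrm{TEE}}(\mathcal{P}) \ll C(XW)$ caps the number of multiplicative letters. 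This is what keeps $M_X^*, M_W^*$ inside the admissible family.

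The second step is distribution. Generalizing Rules~\eqref{eq:rw-lm}--\eqref{eq:rw-sm}, push each additive mask outward through every subsequent multiplicative letter, on both sides:
\[
A\,(W+R)\,B \;=\; AWB + ARB .
\]
Here $\operatorname{rank}(A L B) \le \operatorname{rank}(L)$. For a sparse mask, $\operatorname{nnz}(A S B) \le k_A k_B \operatorname{nnz}(S)$, because each of $A$ and $B$ has at most $k$ nonzeros per row and column. After this, fuse masks of the same type using Rules~\eqref{eq:rw-ll}--\eqref{eq:rw-ss}. The result is
\[
\mathcal{P}(W) \;=\; (M_X^*)^{-1} W M_W^* + L^* + S^* .
\]
Then pull the masks back inside by setting
\[
\widehat{L} = M_X^* L^* (M_W^*)^{-1}, \qquad \widehat{S} = M_X^* S^* (M_W^*)^{-1} .
\]
This is legitimate because the inverses are again sparse of the same form. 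It yields $\mathcal{O}_{\textsf{ext}}(W;\widehat{S},\widehat{L},M_X^*,M_W^*)$. Correctness of the realized scheme follows from Theorem~\ref{thm:composition} together with the triple extension.

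The main obstacle is the bookkeeping of the budgets $s_{\max}$ and $r_{\max}$. Rank is harmless, since it adds across masks and is invariant under invertible multiplications. Sparsity, however, can blow up by a factor $k^2$ per sandwich, and by more under repeated pushes. I would handle this by defining $s_{\max}$, $r_{\max}$ and $k$ as the maxima permitted by the efficiency constraint. The TEE recovery cost of a composition already upper-bounds $\sum \operatorname{nnz}$ times the mixing degree, so any efficient $\mathcal{P}$ automatically satisfies $\operatorname{nnz}(\widehat{S}) \le s_{\max} \ll n^2$, and similarly for rank. I would state this dependence explicitly rather than claim that sparsity is exactly preserved.
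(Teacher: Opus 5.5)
Your route is the same as the paper's. You treat $D,\Pi$ as one-term mixing matrices and collapse the right chain and the stacked $M_X$'s using closure of $\sum_i P_iD_i$ under products. You then push the additive masks outward on both sides, fuse masks of the same type, and pull them back in with $\widehat L=M_X^*L^*(M_W^*)^{-1}$ and $\widehat S=M_X^*S^*(M_W^*)^{-1}$. Your explicit bound $\operatorname{nnz}(ASB)\le k_Ak_B\operatorname{nnz}(S)$ and your additive rank count are more careful than the paper's ``still rank-$r$ and sparse.''

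The step that fails is the one where you depart from the paper: you derive a constant term count, and $\operatorname{nnz}(\widehat S)\le s_{\max}$, from $C_{\mathrm{TEE}}(\mathcal P)\ll C(XW)$. Each mixing letter costs $O(mn)$, so efficiency only limits the number of such letters to $o(n)$, not $O(1)$, while term counts multiply.

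Here is a concrete counterexample. Take $n=2^p$, $H_2=\bigl[\begin{smallmatrix}1&1\\1&-1\end{smallmatrix}\bigr]$ and the butterflies $B_j=I_{2^{j-1}}\otimes H_2\otimes I_{2^{p-j}}$. Each $B_j$ is a valid two-term mixing matrix: $B_j=D^{(j)}+P^{(j)}$ with $D^{(j)}$ a $\pm1$ diagonal and $P^{(j)}$ an involutive permutation, and $B_j^{-1}=\tfrac12 B_j$. Composing $p=\log_2 n$ copies of $\mathcal P_M$ costs only $O(mn\log n)$ per query in the TEE. Yet $B_1\cdots B_p=H_2^{\otimes p}$ is the dense Walsh--Hadamard matrix $H_n$. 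Realizing $W\mapsto WH_n$ for all $W$ forces $L+S=0$, $M_X\propto I$ and $M_W\propto H_n$. So this efficient composition is not an instance of $\mathcal{O}_{\mathrm{ext}}$ with constant-$k$ mixing matrices.

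Your sparsity argument fails in the same example. For $W\mapsto WH_n+S$, the sequential recovery $(X\widetilde W-XS)B_p^{-1}\cdots B_1^{-1}$ never forms $\widehat S=SH_n^{-1}$. That matrix has $n^2$ nonzeros as soon as $S$ has one nonzero per row, even though $\operatorname{nnz}(S)=n$. So the TEE recovery cost does not bound $\operatorname{nnz}(\widehat S)$.

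The paper instead simply assumes that $\mathcal P$ has bounded length, independent of $n$. This gives term count $k^{\ell}=O(1)$ and $\operatorname{nnz}(\widehat S)\le k^{2\ell}\sum_j\operatorname{nnz}(S_j)$. You need that hypothesis, or an explicit $O(1)$ cap on the number of multiplicative letters, in place of the efficiency argument. With it, the rest of your proof goes through.
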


\begin{proof}
We extend the rewriting system of Theorem~\ref{thm:omax} to the
enlarged alphabet $\{S, L, D, \Pi, M_X, M_\mathrm{DS}\}$. Each
$\mathrm{DS}$ symbol expands into a matched pair: an $M_X$ acting
on the right of $X$, and an $M_X^{-1}$ acting on the left of $W$.
Two new rewriting rules suffice:
\begin{align}
M^{(1)} M^{(2)} &\;\longrightarrow\; M^{(12)} \label{eq:rw-mm}\\
M_X^{-1}\,(W + L + S) &\;\longrightarrow\; M_X^{-1} W + L'' + S''
  \label{eq:rw-leftdist}
\end{align}
Rule~\eqref{eq:rw-mm} reflects that any product of generalized
mixing matrices remains a mixing matrix: two $k$-term mixing
matrices multiply to a $k^2$-term one, but since $\mathcal{P}$ has
bounded length, the accumulated term count stays $O(1)$ in $n$, so
the result is still $O(n)$-sparse. Rule~\eqref{eq:rw-leftdist}
is the left-side analogue of
Rules~\eqref{eq:rw-lm}--\eqref{eq:rw-sm}: $M_X^{-1} L$ has rank
at most $r$, and $M_X^{-1} S$ has $O(\mathrm{nnz}(S))$ nonzeros.
Note also that $D$ and $\Pi$ are degenerate mixing matrices
(\S~\ref{sec:defense:mix}), so the old
Rules~\eqref{eq:rw-dd}--\eqref{eq:rw-pid} are absorbed by
Rule~\eqref{eq:rw-mm}.

The remaining three phases mirror Theorem~\ref{thm:omax}:
distribute multiplicative factors through additive structure on
both sides, fuse the right-side chain into one $M_W$ via
Rule~\eqref{eq:rw-mm} and the left-side factors into one
$M_X^{-1}$ analogously, then fuse same-type additive masks via
Rules~\eqref{eq:rw-ll} and~\eqref{eq:rw-ss}. Absorbing the
resulting $L^*, S^*$ into the parenthesized block by the
substitutions $\widehat{L} = M_X L^* M_W^{-1}$ and
$\widehat{S} = M_X S^* M_W^{-1}$ (both still rank-$r$ and sparse)
yields the canonical form
$M_X^{-1}(W + \widehat{S} + \widehat{L}) M_W$ of
Eq.~\eqref{eq:oext}.
 Thus, every efficiency-preserving scheme admitted to
$\mathcal{F}(\mathcal{B}_{\mathrm{ext}})$ satisfies these sparsity and
rank thresholds.
\end{proof}

\parab{Cost.}
The per-query overhead of \extboundary stays within
the \priorboundary envelope: applying $M_X$ and
$M_W^{-1}$ are $O(mn)$ sparse-dense products, identical to
$\mathcal{P}_D$. The per-refresh cost of materializing $M_X^{-1}(W + L + S) M_W$ inside 
the TEE is $O(n^2)$, on par with \priorboundary.

\section{Evaluation}
\label{sec:eval}

In this section we evaluate the performance of \sysattack and \extboundary. We first describe the evaluation setup (\S~\ref{sec:eval_setup}) and then answer the following research questions.

\begin{itemize}[leftmargin=2em]
\setlength\itemsep{0.15em}
\item \textbf{RQ1:} How effective is \sysattack against existing defenses?
\item \textbf{RQ2:} How accurately does \sysattack recover the target obfuscation components at each stage?
\item \textbf{RQ3:} How effective is \extboundary as a defense against \sysattack?
\item \textbf{RQ4:} What cost does \extboundary impose relative to prior defenses?
\end{itemize}

\subsection{Experimental Setup}
\label{sec:eval_setup}

\noindent\textbf{Models.}
As summarized in Table~\ref{tab:model_specs}, we evaluate \sysattack and \extboundary across four representative models to ensure broad applicability. We select BERT-Base~\cite{devlin2019bert} and the Qwen2.5 series~\cite{qwen2024qwen25} to cover both auto-encoding and state-of-the-art auto-regressive (decoder-only) architectures. ViT-Base~\cite{dosovitskiy2021vit} is further included to verify generalizability on vision-related transformer tasks. To assess scalability, our selection spans from $86$M parameters up to $1.54$B (Qwen2.5-1.5B), the latter mirroring the complexity of modern high-performance edge LLMs such as Microsoft's Phi~\cite{abdin2024phi3}. All model checkpoints are loaded via the HuggingFace Transformers library~\cite{wolf2020transformers}.
\begin{table}[htbp]
\centering
\caption{Summary of model architectures and configurations used in our evaluation.}
\label{tab:model_specs}
\resizebox{\linewidth}{!}{
\begin{tabular}{lcccc}
\toprule
\textbf{Model} & \textbf{Architecture} & \textbf{Params} & \textbf{Hidden ($n$)} & \textbf{Layers ($L$)} \\ \midrule
BERT-Base      & Auto-encoding         & 110M            & 768                   & 12                    \\
ViT-Base       & Vision Transformer    & 86M             & 768                   & 12                    \\
Qwen2.5-0.5B   & Auto-regressive       & 490M            & 896                   & 24                    \\
Qwen2.5-1.5B   & Auto-regressive       & 1.54B           & 1536                  & 28                    \\ \bottomrule
\end{tabular}
}
\end{table}

\noindent\textbf{Datasets.}
We evaluate four representative datasets that cover multiple modalities and task complexities, summarized in Table~\ref{tab:dataset_specs}. For natural language understanding, we select MNLI~\cite{williams2018mnli}, SST-2~\cite{socher2013sst}, and QNLI~\cite{rajpurkar2016squad} from the General Language Understanding Evaluation (GLUE) benchmark~\cite{wang2018glue}. MNLI is a large-scale resource for multi-genre natural-language inference, SST-2 is the industry standard for sentiment analysis, and QNLI converts question answering into a sentence-pair classification task. To demonstrate generalizability in computer vision, we evaluate ViT-Base on CIFAR-100~\cite{krizhevsky2009cifar}. This selection ensures that our evaluation reflects realistic deployment scenarios for models shielded by trusted execution environments.

\begin{table}[htbp]
\centering
\caption{Representative datasets used in our evaluation.}
\label{tab:dataset_specs}
\resizebox{\linewidth}{!}{
\begin{tabular}{lcccl}
\toprule
\textbf{Dataset} & \textbf{Modality} & \textbf{Task Type} & \textbf{\# Samples} & \textbf{Evaluation Metric} \\ \midrule
MNLI             & Text              & NLI                & 393K                & Accuracy                   \\
SST-2            & Text              & Sentiment          & 67K                 & Accuracy                   \\
QNLI             & Text              & QA/NLI             & 105K                & Accuracy                   \\
CIFAR-100        & Image             & Classification     & 60K                 & Top-1 Accuracy             \\ \bottomrule
\end{tabular}
}
\end{table}

\noindent\textbf{Selective Defenses.}
To evaluate the effectiveness of \sysattack against various protection paradigms, we select five representative obfuscation algorithms summarized in Table~\ref{tab:related_works}. For each prior defense, we evaluate the matrix-level weight transformation summarized in Table~\ref{tab:related_works}. We include NNSplitter and LoRO to represent the additive category, which use sparse masks and low-rank masks respectively. TSQP and TransLinkGuard cover multiplicative primitives via column-wise scaling and permutation. To assess composite defenses, we incorporate ArrowCloak~\cite{wang2025arrows}, the state-of-the-art fusion method combining both multiplicative and additive transformations. We additionally evaluate \sysattack against \priorboundary, the canonical form of compositions over the $\{S,L,D,\Pi\}$ primitives defined in \S~\ref{sec:security_boundary}.

\noindent \textbf{Key-refresh configuration.}
Following the refresh settings of the prior schemes summarized in
Table~\ref{tab:related_works}, we keep the sparse mask $S$ static and refresh
$L_t$, $D_t$, and $\Pi_t$ at each key epoch.

\noindent \textbf{Baselines.}
We compare \sysattack against three baselines to provide a clear performance reference. The \emph{White-box} setting assumes the defender deploys the model directly to an untrusted REE. The adversary possesses complete access to the model parameters and architecture, and obtains $\mathcal{M}_\textsf{vic}$ directly without any recovery effort. We fine-tune $\mathcal{M}_\textsf{vic}$ with the same $1\%$-task-data budget that \sysattack uses. This establishes the upper bound of attack performance. The \emph{Black-box} setting represents the scenario where model execution is fully protected within the TEE. The adversary only has access to the public pre-trained model $\mathcal{M}_\textsf{pre}$ and a limited training set ($1\%$-task-data, the same budget as \sysattack). We fine-tune all parameters of $\mathcal{M}_\textsf{pre}$ to produce a surrogate $\mathcal{M}_\textsf{sur}$. This baseline demonstrates that generic fine-tuning alone fails to reconstruct $\mathcal{M}_\textsf{sur}$ to the white-box level, underscoring the necessity of the model stealing attacks. 
ArrowMatch~\cite{wang2025arrows} serves as the state-of-the-art prior attack against multiplicative obfuscation primitives ($\{\Pi,D\}$). Compared to ArrowMatch, \sysattack exhibits a significant advantage in handling additive primitives ($\{S,L\}$). This performance gap validates the advanced capabilities of our attack pipeline in recovering parameters from the secured models. For \extboundary we additionally compare against the strongest prior defense, ArrowCloak~\cite{wang2025arrows}, to quantify the performance reduction in \sysattack surrogate accuracy from the new primitives.

\noindent\textbf{Metric.}
We evaluate attack effectiveness using two metrics. First, we measure the classification accuracy of the surrogate model $\mathcal{M}_\textsf{sur}$. A higher accuracy indicates that $\mathcal{M}_\textsf{sur}$ has captured a greater portion of the functionality from $\mathcal{M}_\textsf{vic}$. 
Second, we calculate
\[
\text{Rel.Black}=\frac{\mathrm{Acc}(\mathcal{M}_\textsf{sur})}{\mathrm{Acc}(\text{Black-box})},
\]
the relative performance compared to the black-box baseline. This comparison highlights how effectively our attack pipeline accelerates the model fine-tuning phase. We report both metrics across all model--task--defense combinations.

\subsection{RQ1: Attack Performance Across Defenses}
\label{sec:eval_rq1}

  \begin{table*}[!t]                                                                                                                                 
  \centering                                             
  \caption{Attack accuracy (\%) across four models, four tasks, and six evaluated weight-obfuscation configurations; single-seed. $\mathcal{M}_\textsf{obf}$: exposed obfuscated model;
  \sysattack: recovered surrogate. White-box / Black-box are upper/lower bounds.}                            
  \label{tab:new_attack_results}
  \setlength{\tabcolsep}{2pt}                                                                                                                        
  \resizebox{\textwidth}{!}{                                                                                                                         
  \begin{tabular}{llcccccccccccccc}
  \toprule                                                                                                                                           
   & & \multicolumn{2}{c}{NNSplitter} & \multicolumn{2}{c}{LoRO} & \multicolumn{2}{c}{TSQP} & \multicolumn{2}{c}{TransLinkGuard} &
  \multicolumn{2}{c}{ArrowCloak} & \multicolumn{2}{c}{\priorboundary} & \multirow{2}{*}{White-box} & \multirow{2}{*}{Black-box} \\                   
  \cmidrule(lr){3-4} \cmidrule(lr){5-6} \cmidrule(lr){7-8} \cmidrule(lr){9-10} \cmidrule(lr){11-12} \cmidrule(lr){13-14}
   & & $\mathcal{M}_\textsf{obf}$ & \sysattack & $\mathcal{M}_\textsf{obf}$ & \sysattack & $\mathcal{M}_\textsf{obf}$ & \sysattack & $\mathcal{M}_\textsf{obf}$ & \sysattack &               
  $\mathcal{M}_\textsf{obf}$ & \sysattack & $\mathcal{M}_\textsf{obf}$ & \sysattack & & \\                                                                               
  \midrule                                                                                                                                           
  ViT-Base & CIFAR-100 & 1.94 & 90.28 & 1.14 & 90.36 & 0.88 & 89.44 & 0.98 & 90.26 & 0.88 & 89.82 & 0.86 & 90.62 & 90.44 & 68.44 \\                  
  \midrule                                                                                                                                           
  \multirow{3}{*}{BERT-Base} & MNLI & 33.01 & 84.67 & 32.13 & 84.86 & 34.77 & 84.47 & 32.03 & 84.57 & 32.03 & 83.59 & 32.03 & 83.69 & 84.67 & 45.80  
  \\                                                                                                                                                 
   & QNLI & 49.80 & 90.43 & 50.20 & 90.62 & 48.83 & 90.62 & 53.91 & 91.02 & 53.91 & 90.62 & 49.31 & 89.97 & 90.43 & 63.48 \\
   & SST-2 & 49.02 & 93.16 & 53.52 & 93.16 & 51.56 & 92.77 & 53.52 & 93.16 & 53.52 & 92.58 & 53.52 & 93.36 & 93.16 & 58.20 \\                        
  \midrule                                                                                                                                           
  \multirow{3}{*}{Qwen2.5-0.5B} & MNLI & 34.77 & 83.79 & 33.98 & 84.77 & 33.59 & 84.96 & 32.42 & 84.18 & 32.03 & 83.40 & 31.84 & 83.40 & 84.38 &     
  66.41 \\                                                                                                                                           
   & QNLI & 50.00 & 89.84 & 52.15 & 89.26 & 46.29 & 91.41 & 51.37 & 89.65 & 47.46 & 91.60 & 49.22 & 89.84 & 91.99 & 62.11 \\
   & SST-2 & 49.02 & 94.34 & 48.24 & 93.75 & 49.61 & 91.60 & 48.24 & 94.14 & 50.39 & 94.14 & 47.07 & 92.38 & 94.14 & 51.17 \\                        
  \midrule                                                                                                                                           
  \multirow{3}{*}{Qwen2.5-1.5B} & MNLI & 34.18 & 89.26 & 33.79 & 89.26 & 32.42 & 90.23 & 31.05 & 89.65 & 32.03 & 88.48 & 33.25 & 89.28 & 91.41 &     
  72.36 \\                                                                                                                                           
   & QNLI & 47.07 & 93.16 & 53.32 & 93.36 & 54.10 & 94.14 & 53.52 & 92.97 & 52.93 & 92.97 & 48.37 & 92.26 & 93.75 & 67.58 \\
   & SST-2 & 47.27 & 96.48 & 52.15 & 96.88 & 52.54 & 95.70 & 53.32 & 96.29 & 54.49 & 96.09 & 51.43 & 94.86 & 96.68 & 69.53 \\                        
  \midrule                                                                                                                                           
    Average & & 39.61 & 90.54 & 41.07 & 90.63 & 40.46 & 90.53 & 41.04 & 90.59 & 40.97 & 90.33 & 39.69 & 89.97 & 91.11 & 62.51 \\ 
  Ratio   & & 0.63$\times$ & 1.45$\times$ & 0.66$\times$ & 1.45$\times$ & 0.65$\times$ & 1.45$\times$ & 0.66$\times$ & 1.45$\times$ & 0.66$\times$ & 1.45$\times$ & 0.63$\times$ & 1.44$\times$ & 1.46$\times$ & 1.00$\times$ \\   
  \bottomrule                                                                                                                                  
  \end{tabular}                                          
  }
  \end{table*}

We report the attack performance of \sysattack in Table~\ref{tab:new_attack_results}. The last row reports the average performance across all settings. We can observe that \sysattack effectively recovers the obfuscated weights and achieves a high attack performance. Relative to the black-box baseline, the attack reaches on average $1.45\times$, essentially matching the $1.46\times$ white-box upper bound. On \priorboundary, the full-stack target, the surrogate is also close to white-box. This shows that \sysattack successfully recovers the knowledge encoded in the obfuscated weights, lifting the attack performance from the black-box level to near-white-box.

On the contrary, the $\mathcal{M}_\textsf{obf}$-based baseline has a much lower attack performance. We can also observe that $\mathcal{M}_\textsf{obf}$ performs even worse than a black box fine-tuning model. It is because the obfuscated weights $\mathbf{W}_{\mathrm{obf}}$ give a destructive initialization for $\mathcal{M}_\textsf{sur}$, this initialization is even worse than fine-tuning from $\mathbf{W}_{\mathrm{pre}}$ alone. Thus $\mathbf{W}_{\mathrm{obf}}$ cannot be used to train a surrogate model. It means that the evaluated weight-obfuscation configurations are effective in front of the naive $\mathcal{M}_\textsf{obf}$-based attack. But \sysattack can utilize the leaked column information to recover weights from these configurations and achieve near-white-box performance.

\noindent\emph{Answer to RQ1.} \sysattack effectively recovers weights from the evaluated configurations and achieves a high attack performance. The attack performance is on average $1.45\times$ the black-box baseline (the white-box upper bound is $1.46\times$).

\subsection{RQ2: Stage-wise Attack Success}
\label{sec:eval_rq2}

We validate the complete recovery pipeline by comparing the output of each
stage with the corresponding ground truth. The results show that
\sysattack succeeds at every stage rather than relying solely on downstream
fine-tuning. We conduct three independent attack runs using BERT-Base on MNLI and report the mean of each metric in Table~\ref{tab:stage_recovery}.

Table~\ref{tab:stage_recovery} reports the recovery metrics for each stage.
In Stage~1, we report the percentage of columns correctly aligned across
views. In Stage~2, we report $L$-Sim ($1-\operatorname{RelF}_L$), which
measures the similarity between the recovered and ground-truth $L$. In
Stage~3, we report absolute-permutation accuracy and the precision, recall,
and F1 score for sparse-support estimation.

\begin{table}[htbp]

\centering
\caption{Stage-wise recovery performance of \sysattack against
\priorboundary. Results are the mean of three independent attack runs.}
\label{tab:stage_recovery}
\setlength{\tabcolsep}{2.5pt}
\resizebox{\linewidth}{!}{
\begin{tabular}{lcccc}
\toprule
\textbf{Setting}
& \makecell{\textbf{Stage 1}\\Rel. $\Pi$ Acc.}
& \makecell{\textbf{Stage 2}\\$L$-Sim.}
& \makecell{\textbf{Stage 3}\\Abs. $\Pi$ Acc.}
& \makecell{\textbf{Stage 3}\\Support P/R/F1} \\
\midrule
BERT-Base/MNLI & 100.00\% & 99.97\% & 100.00\% & 99.36\%/99.22\%/99.29\% \\
\bottomrule
\end{tabular}}

\end{table}

\noindent\emph{Answer to RQ2.} \sysattack achieves approximately 100\%
recovery performance at every stage.

\subsection{RQ3: Defense Effectiveness}
\label{sec:eval_rq3}

In this RQ, we compare \extboundary's defense effectiveness with prior obfuscation methods. We evaluate \extboundary using the existing \sysattack pipeline and the same hyperparameters as in \S~\ref{sec:eval_rq1}, without a defense-specific adaptation to \extboundary. Table~\ref{tab:defense_performance} shows the attack performance against \extboundary and the best performance of the existing lightweight obfuscation algorithms (denoted as ``Prior Best''). We report $\mathcal{M}_\textsf{sur}$'s accuracy and the relative accuracy compared to the black-box baseline (denoted as ``Rel.Black''). The black-box baseline is an empirical reference point obtained without using the obfuscated weights, rather than a formal security lower bound or ideal-protection guarantee.

Under the evaluated \sysattack configuration, \extboundary yields lower surrogate accuracy than the best prior defense. On average, the Rel.Black of \extboundary is $1.00\times$, but Prior Best is $1.46\times$. The surrogate reaches the empirical black-box reference point on average, indicating that this evaluated pipeline obtains no measured benefit in average accuracy from the obfuscated weights. Notably, \extboundary performs even better on smaller models such as ViT-Base, where Rel.Black drops to $0.78\times$. These results establish empirical model-stealing resistance only under the evaluated \sysattack configuration.

\begin{table}[htbp]
\centering
\caption{Surrogate accuracy under \extboundary vs.\ the strongest prior defense, both evaluated against \sysattack. Rel.Black is $\mathrm{Acc}(\mathcal{M}_\textsf{sur})/\mathrm{Acc}(\text{Black-box})$.}
\label{tab:defense_performance}
\resizebox{\linewidth}{!}{
\begin{tabular}{llcccc}
\toprule
\multirow{2.5}{*}{Model} & \multirow{2.5}{*}{Dataset} & \multicolumn{2}{c}{\extboundary} & \multicolumn{2}{c}{Prior Best} \\
\cmidrule(lr){3-4} \cmidrule(lr){5-6}
& & $\mathcal{M}_\textsf{sur}$ & Rel.Black & $\mathcal{M}_\textsf{sur}$ & Rel.Black \\
\midrule
ViT-Base & CIFAR-100 & 53.40\% & 0.78$\times$ & 89.44\% & 1.31$\times$ \\
\midrule
\multirow{2}{*}{BERT-Base} & MNLI & 35.86\% & 0.78$\times$ & 83.59\% & 1.83$\times$ \\
& SST-2 & 59.97\% & 1.03$\times$ & 92.58\% & 1.59$\times$ \\
\midrule
\multirow{3}{*}{Qwen2.5-0.5B} & MNLI & 66.99\% & 1.01$\times$ & 83.40\% & 1.26$\times$ \\
& QNLI & 70.75\% & 1.14$\times$ & 89.26\% & 1.44$\times$ \\
& SST-2 & 66.94\% & 1.31$\times$ & 91.60\% & 1.79$\times$ \\
\midrule
\multirow{3}{*}{Qwen2.5-1.5B} & MNLI & 74.74\% & 1.03$\times$ & 88.48\% & 1.22$\times$ \\
& QNLI & 67.48\% & 1.00$\times$ & 92.97\% & 1.38$\times$ \\
& SST-2 & 63.96\% & 0.92$\times$ & 95.70\% & 1.38$\times$ \\
\midrule
Average & & 62.23\% & 1.00$\times$ & 89.67\% & 1.46$\times$ \\
\bottomrule
\end{tabular}
}
\end{table}

\noindent\emph{Answer to RQ3.} Under the evaluated \sysattack configuration, \extboundary reduces Rel.Black from $1.46\times$ under the strongest prior defense (ArrowCloak) to $1.00\times$, matching the empirical black-box reference on average. This result is empirical and attack-specific.

\subsection{RQ4: Cost of \texorpdfstring{\extboundary}{O\_ext}}
\label{sec:eval_rq4}\label{sec:eval:cost}

\begin{table}[htbp]
\centering
\caption{Phase-level latency (ms) for a protected BERT-Base block with $b=1$.}
\label{tab:phase_overhead}
\resizebox{\linewidth}{!}{
\begin{tabular}{l cccc cccc c}
\toprule
\multirow{2}{*}{Defense} & \multicolumn{4}{c}{Self-Attention (ms)} & \multicolumn{4}{c}{Feed-Forward (ms)} & \multirow{2}{*}{Tot.\ (ms)} \\
\cmidrule(lr){2-5} \cmidrule(lr){6-9}
 & Obf.\ (ms) & GPU (ms) & Trans.\ (ms) & Rec.\ (ms) & Obf.\ (ms) & GPU (ms) & Trans.\ (ms) & Rec.\ (ms) & \\
\midrule
ArrowCloak           & 24.15 & 5.54 & 250.99 & 313.92 & 42.36 & 0.60 & 138.63 & 146.63 & 922.81 \\
\priorboundary       & 22.12 & 4.66 & 202.99 & 318.46 & 40.33 & 0.49 & 118.09 & 148.70 & 855.82 \\
\extboundary         & 80.28 & 4.38 & 195.94 & 303.80 & 59.16 & 0.48 & 116.08 & 143.21 & 903.33 \\
\bottomrule
\end{tabular}
}
\end{table}

\begin{table}[htbp]
\centering
\caption{Per-primitive latency on a $768\times 768$ BERT-Base linear layer at $b=1$, seq.\ length $768$.}
\label{tab:operator_overhead}
\resizebox{\linewidth}{!}{
\begin{tabular}{lccc}
\toprule
Primitive & Obfuscation (ms) & GPU Compute (ms) & Recovery (ms) \\
\midrule
Sparse Mask & 1.010 & 0.242 & 7.428 \\
Low-rank Mask ($r=1/2/4$) & 1.04/1.04/1.00 & 0.25/0.26/0.24 & 3.05/5.93/9.51 \\
Col.\ Scaling / Permutation & 0.82 / 3.96 & 0.23 / 0.23 & 0.87 / 4.12 \\
\midrule
Sparse Mixing & 10.97 & 0.23 & 11.61 \\
\bottomrule
\end{tabular}
}
\end{table}

We implement a TEE-GPU prototype to measure the obfuscation and recovery cost on the TEE side. The TEE runs Python inside Occlum, the GPU runs a long-lived worker process, and the two sides communicate via sockets. We run two microbenchmarks on a $768\times 768$ BERT-Base linear layer at $b=1$ and sequence length $768$ ($10$-sample mean), and break each measurement into four phases: obfuscation, GPU compute, TEE$\leftrightarrow$GPU transfer, and TEE-side recovery. Table~\ref{tab:operator_overhead} reports the cost of each primitive in isolation; the Double-Sided primitive is omitted because its cost is by construction twice that of its column-wise multiplicative counterpart. Table~\ref{tab:phase_overhead} aggregates the four phases over one self-attention block and one feed-forward block for \extboundary, ArrowCloak~\cite{wang2025arrows} (the strongest prior hybrid scheme), and \priorboundary (the canonical representative of the $\{S,L,D,\Pi\}$ family).

\noindent\textbf{New primitives are on par with prior ones.} Table~\ref{tab:operator_overhead} shows that the sparse multiplicative primitive stays within the same order of magnitude as the prior primitives. This is because it can be decomposed into a permutation $\Pi$ followed by a diagonal scaling $D$, so its cost is essentially that of $\Pi$ plus $D$.

\noindent\textbf{\extboundary's end-to-end cost matches prior schemes.} Table~\ref{tab:phase_overhead} shows that \extboundary's end-to-end cost is on par with both \priorboundary and ArrowCloak. This is because the new primitives only add cost in the obfuscation phase for $\mathbf{X}$, while recovery and transfer are essentially unaffected.

\noindent\emph{Answer to RQ4.} \extboundary's end-to-end cost is on par with ArrowCloak while reducing Collapse surrogate accuracy by $27.4$ percentage points (\S~\ref{sec:eval_rq3}).

\section{Discussion}

\noindent\textbf{GPU TEEs.}
GPU-based confidential computing is a promising direction for protecting
accelerator-side execution, with proposals such as Graviton~\cite{volos2018graviton}
and ACAI~\cite{sridhara2024acai} extending hardware-isolated execution to the
accelerator (see~\cite{wang2024ccgpu} for a survey of the CPU--GPU
confidential-computing landscape).
However, it depends on specific hardware, firmware,
and deployment support, and is not yet available on many commodity devices. Our
work studies the current TEE--REE setting where the CPU-side TEE is trusted
while the GPU/NPU remains untrusted. Thus, our primitive-based analysis is
complementary to GPU TEEs: confidential GPUs can reduce the exposed boundary when available, while our results characterize the security of algorithmic obfuscation when the external accelerators are not trusted.

\noindent\textbf{Side-channel attacks.}
We assume the TEE correctly protects en\-clave-resident code and data, and do not
consider hardware side channels or transient-execution attacks against the TEE.
A separate body of work recovers model weights through fault injection or
memory/PCIe side channels---e.g., RowHammer-based weight extraction in
DeepSteal~\cite{rakin2022deepsteal}, GPU memory-bus snooping in
HyperTheft~\cite{yuan2024hypertheft} and CipherSteal~\cite{yuan2025ciphersteal},
PCIe-traffic analysis in the Hermes attack~\cite{zhu2021hermes}, and the
side-channel reverse-engineering of CNNs by Hua et al.~\cite{hua2018reverse}.
This line is orthogonal to our focus: our attacks do not rely on compromising
the TEE or any hardware channel; they exploit algebraic structure in the
obfuscated tensors already exposed to the REE. Existing TEE and bus-side
side-channel mitigations can therefore be combined with our framework.

\noindent\textbf{Key rotation and boundary-visible leakage.}
Our analysis reveals a counterintuitive consequence of lightweight
obfuscation. Because these primitives conceal only selected algebraic aspects
of the weight matrix, a single obfuscated view can still reveal exploitable
structure. Refreshing the keys produces multiple differently obfuscated views;
comparing them may reveal complementary clues and make weight recovery easier.
Thus, under our threat model, more frequent key rotation can weaken resistance
to model extraction. Appendix~\ref{app:fully-refreshed-attack} analyzes the
fully refreshed setting.

\noindent\textbf{Defense scope and adaptive attacks.}
Our evidence for \extboundary is empirical and limited to the existing \sysattack pipeline without a defense-specific adaptation.
Developing defense-aware attacks targeting $M_X$ and $M_W$ is left for future work. Thus, we do not claim any form of formal model-extraction security guarantee with \extboundary.

\noindent\textbf{Other TSLP Schemes.}
GroupCover~\cite{zhang2024groupcover} represents a different security--efficiency trade-off. By mixing
groups of weight vectors through matrix multiplication, it can hide stronger
column-level invariants than efficient TSLP schemes.
However, it splits into mixing groups based on distances between weight vectors,
rather than randomly choosing groups. Since fine-tuned weights are often close
to the public pretrained weights, $W_{\mathrm{pre}}$ may help the attacker guess which vectors are mixed together.
ShadowNet~\cite{sun2023shadownet} uses a dense additive mask, but the mask is not fully confined within the TEE: part of the mask-related information is exposed to the GPU to enable efficient outsourced computation.
This exposure gives the attacker extra information to steal the protected model.

\vspace{-0.3cm}
\section{Related Work}

\noindent\textbf{TEE-shielded model partition.}
TEE-shielded partition schemes split a model between TEE and REE; early designs such as DarkneTZ~\cite{mo2020darknetz} and Occlumency~\cite{lee2019occlumency} simply partition without obfuscating offloaded weights. To better protect the offloaded weights, subsequent schemes combine algebraic tricks heuristically: ShadowNet~\cite{sun2023shadownet} composes permutation with random per-channel scaling on convolution filters; Magnitude~\cite{hou2022model} partitions selected high-magnitude weights into the TEE; NNSplitter~\cite{zhou2023nnsplitter} perturbs a small fraction ($\approx 0.002\%$) of weight entries and stores their original values in the TEE; TransLinkGuard~\cite{li2024translinkguard} couples layer-specific row/column permutations with TEE authorization that carries permuted activations across adjacent transformer layers; Soter~\cite{shen2022soter} morphs operators with scalar blinding coins and restores accumulated coins at enclave boundaries; TSQP~\cite{sun2025tsqp} combines a learned dissimilar model with quantization scaling; LoRO~\cite{xiong2025loro} obfuscates weights with a dense mask generated from low-rank factors; ArrowCloak~\cite{wang2025arrows} composes permutation, scaling, and a low-rank mask. A common primitive-level representation 
of these schemes is missing.

\noindent\textbf{Attacks on obfuscation-based TSLP.}
Existing attacks rely heavily on expert intuition and lack a unified methodology. Zhang et al.~\cite{zhang2024noprivacy} taxonomize TSDP schemes by which layers are shielded, without analyzing the algebraic structure of the obfuscation applied to offloaded layers. ArrowMatch~\cite{wang2025arrows} attacks schemes built on permutation and scaling by matching obfuscated columns against a public prior, with an attack design specific to these two primitives. LoRO~\cite{xiong2025loro} similarly attacks permutation-, scaling-, and sparse-mask-based schemes by exploiting their statistical proximity to a public prior, but its attack design relies heavily on hand-crafted heuristics.

\noindent\textbf{Our work.}
We move the design of TSLP attacks and defenses beyond expert intuition. To this end, we introduce a unified algebraic abstraction over the representative obfuscation primitives in prior approaches. From this abstraction we derive a canonical structural form for the obfuscation family compossible by these primitives and a systematic attack against it, and extend the primitive set with two new constructs that close the structural channels exploited by the attack. 
We hope to provide a systematic way of understanding obfuscation-based secure LLM inference, so that our community can make reasonable claims about their approaches. 
 
\vspace{-0.3cm}
\section{Conclusion}
The rapid evolution of heuristic attacks and defenses for TEE-shielded LLM partitioning (TSLP) has substantially advanced this area, but it has also led to a fragmented design space dominated by ad-hoc solutions. As a result, existing schemes are difficult to compare, evaluate, and extend in a principled manner. In this paper, we take a step toward systematizing this design space by introducing \emph{unified obfuscation primitives}, a primitive-level abstraction that captures representative defense mechanisms for TSLP. This abstraction enables us to identify a canonical structural boundary for the declared primitive family. Building on this foundation, we systematically attack this canonical form to reveal the shared vulnerabilities of many existing TSLP approaches published in top-tier venues. 
Our evaluation demonstrates that the proposed attack can effectively recover the hidden obfuscation structure and build high-fidelity surrogate models across diverse obfuscation families, revealing the fundamental limitations of lightweight obfuscation in TSLP systems.

\begin{acks}
We thank our anonymous reviewers for their insightful feedback. The research is supported in part by the National Key R\&D Program of China under Grant 2024YFB2906803, and National Natural Science Foundation of China (NSFC) under Grant 62472247, as well as a CIE-Smartchip research grant. The corresponding author of this paper is Zhuotao Liu.
\end{acks}

\bibliographystyle{ACM-Reference-Format}
\bibliography{gab}

\appendix

\section{Open Science}
\label{app:open-science}
Our artifacts are available at
\url{https://github.com/InspiringGroup-NeoLab/TEE-Obfuscation}.

\section{Ethical Considerations}
\label{app:ethics}
This dual-use study informs stronger defenses; we evaluate \sysdefense
as a mitigation. The adversary controls the device without compromising
the TEE; cloud/API-only deployments are outside scope. Experiments use
public checkpoints, with no live targets or sensitive personal data.
Artifacts provide attack and defense implementations, not extracted
victim parameters.

\section{Sparse Mixing Matrix Construction}
\label{app:111}
The sparse mixing operator in \S~\ref{sec:our-defense} requires an invertible matrix $\mathbf{M}$ such that both $\mathbf{M}$ and $\mathbf{M}^{-1}$ remain sparse. A simple way to satisfy both requirements is to use constant-size block mixing.

Partition the column indices into groups $\mathcal{G}_1,\ldots,\mathcal{G}_s$ with $|\mathcal{G}_j|=k_j\leq k$, where $k$ is a small constant ($k=2$ in our setting). For each group, choose an invertible matrix $\mathbf{B}_j\in\mathbb{R}^{k_j\times k_j}$. After applying an input permutation $\boldsymbol{\Pi}_{\mathrm{in}}$ to gather the columns of the same group together, we construct
\[
\mathbf{M}
=
\boldsymbol{\Pi}_{\mathrm{out}}
\begin{bmatrix}
\mathbf{B}_1 & \mathbf{0} & \cdots & \mathbf{0} \\
\mathbf{0} & \mathbf{B}_2 & \cdots & \mathbf{0} \\
\vdots & \vdots & \ddots & \vdots \\
\mathbf{0} & \mathbf{0} & \cdots & \mathbf{B}_s
\end{bmatrix}
\boldsymbol{\Pi}_{\mathrm{in}},
\]
with $\boldsymbol{\Pi}_{\mathrm{in}},\boldsymbol{\Pi}_{\mathrm{out}}$ permutation matrices. Then
\[
\mathbf{M}^{-1}
=
\boldsymbol{\Pi}_{\mathrm{in}}^{-1}
\begin{bmatrix}
\mathbf{B}_1^{-1} & \mathbf{0} & \cdots & \mathbf{0} \\
\mathbf{0} & \mathbf{B}_2^{-1} & \cdots & \mathbf{0} \\
\vdots & \vdots & \ddots & \vdots \\
\mathbf{0} & \mathbf{0} & \cdots & \mathbf{B}_s^{-1}
\end{bmatrix}
\boldsymbol{\Pi}_{\mathrm{out}}^{-1}.
\]
Because each block has size at most $k=O(1)$, both $\mathbf{M}$ and $\mathbf{M}^{-1}$ contain only $O(kn)=O(n)$ nonzero entries. Obfuscation and recovery therefore require only linear-time sparse post-processing. One concrete instantiation uses $2\times 2$ blocks
$\mathbf{B}_j=\bigl[\begin{smallmatrix}1&\alpha_j\\\beta_j&1+\alpha_j\beta_j\end{smallmatrix}\bigr]$ with $\det\mathbf{B}_j=1$, making every exposed column a linear combination of two original columns.

\section{Model Extraction Game}
\label{app:extraction-game}

This appendix makes precise the model-extraction game summarized in
Definition~\ref{def:obfuscation-primitive}. We use the system and threat
model of \S~\ref{sec:threat_model} throughout.

\paragraph{Participants and deployment.}
The model owner fine-tunes the public pretrained model
$\mathcal{M}_{\textsf{pre}}$ into the proprietary victim model
$\mathcal{M}_{\textsf{vic}}$ and deploys it through the TEE--REE
pipeline using an obfuscation scheme $\mathcal{S}$. The scheme samples
and refreshes its keys according to its declared schedule. The
device-owner adversary is honest-but-curious: it follows the prescribed
inference protocol and cannot compromise the TEE, but it controls the
REE and records all information exposed there.

\paragraph{Available and hidden information.}
The adversary knows the model architecture,
$\mathcal{M}_{\textsf{pre}}$, and the public description and parameters
of $\mathcal{S}$. It may use labeled task data up to the declared budget
$B$. Across key refreshes, it records at most $T$ distinct exposed
views. For every offloaded layer $\ell$ in view $t$, this includes the
released input, the obfuscated weight, and the REE-computed product. We
write this layer-level observation as
\[
  \mathsf{V}_{\ell,t}
  :=
  \left(
    \mathbf{X}_{\ell,t},
    \widetilde{\mathbf{W}}_{\ell,t},
    \mathbf{X}_{\ell,t}\widetilde{\mathbf{W}}_{\ell,t}
  \right).
\]
Let $\mathcal{D}_{B}$ denote the labeled task data available to the
adversary, where $|\mathcal{D}_{B}|\le B$, and let
$\mathcal{Y}_{\mathrm{pub}}$ denote outputs available through the normal
inference interface. The adversary's complete view is
\[
  \mathsf{View}_{\mathcal{S}}^{T,B}
  :=
  \left(
    \mathcal{M}_{\textsf{pre}},
    \operatorname{pub}(\mathcal{S}),
    \mathcal{D}_{B},
    \{\mathsf{V}_{\ell,t}\}_{\ell,\,1\le t\le T},
    \mathcal{Y}_{\mathrm{pub}}
  \right),
\]
where $\operatorname{pub}(\mathcal{S})$ contains the public description
and parameters of the deployed scheme.
The victim weights, private fine-tuning data, obfuscation keys, recovery
state, and post-recovery values inside the TEE remain hidden.

\paragraph{Attack and output.}
Let $\mathcal{A}$ be any polynomial-time attack algorithm. Within the
budgets $T$ and $B$, $\mathcal{A}$ may jointly process all public
information and recorded views and may choose later protocol-valid
queries based on earlier observations. It outputs a surrogate model
$\widehat{\mathcal{M}}$ (the $\mathcal{M}_{\textsf{sur}}$ of
\S~\ref{sec:threat_model}) intended to reproduce the task behavior of
$\mathcal{M}_{\textsf{vic}}$:
\[
  \widehat{\mathcal{M}}
  \leftarrow
  \mathcal{A}\!\left(\mathsf{View}_{\mathcal{S}}^{T,B}\right),
  \qquad \mathcal{A}\in\mathsf{PPT}.
\]
The compact view notation includes the entire protocol-valid interaction;
it does not restrict the attack to a particular extraction method. The
adversary's goal is to maximize the surrogate's task accuracy.

\paragraph{Winning condition.}
After the attack terminates, the surrogate and victim models are
evaluated on the same evaluation set $\mathcal{D}_{\mathrm{eval}}$.
The attack succeeds if
\[
  \operatorname{Acc}_{\mathcal{D}_{\mathrm{eval}}}
  (\widehat{\mathcal{M}})
  \ge
  \operatorname{Acc}_{\mathcal{D}_{\mathrm{eval}}}
  (\mathcal{M}_{\mathrm{vic}})-\delta.
\]
Here, $\operatorname{Acc}_{\mathcal{D}_{\mathrm{eval}}}(\mathcal{M})$
denotes the task accuracy of $\mathcal{M}$ on
$\mathcal{D}_{\mathrm{eval}}$, and $\delta$ is the allowed accuracy gap
from the victim model. This criterion defines attack success; it does
not by itself assert that a particular obfuscation scheme prevents every
such attack.

\section{Proof of the Composition Theorem}
\label{app:composition-proof}

\begin{proof}[Proof of Theorem~\ref{thm:composition}]
Let $\mathcal{C}=\mathcal{Q}\circ\mathcal{P}$.
For an admissible input $X$ and weight $W$, let
$W_1=\mathcal{O}_1(W)$ and $W_2=\mathcal{O}_2(W_1)$. Correctness of
the two primitives gives
\[
  \mathcal{R}_2(XW_2)=XW_1,
  \qquad
  \mathcal{R}_1(XW_1)=XW.
\]
Applying the recoveries in reverse proves (i). Since the composition
executes each primitive once, the additive cost model gives
\[
  C(\mathcal{C})=C(\mathcal{P})+C(\mathcal{Q}),
\]
which proves (ii).

For (iii), let $\mathcal{S}\in\{\mathcal{P},\mathcal{Q}\}$ and assume
a polynomial-time full-transcript simulator under the same budgets such
that
\[
  \mathsf{Sim}_{\mathcal{S}}
  \bigl(\mathsf{View}_{\mathcal{S}}^{T,B}\bigr)
  \stackrel{d}{=}
  \mathsf{View}_{\mathcal{C}}^{T,B}.
\]
For any polynomial-time attack $\mathcal{A}$ on $\mathcal{C}$,
$\mathcal{A}\circ\mathsf{Sim}_{\mathcal{S}}$ is an attack on
$\mathcal{S}$ with the same surrogate-output distribution and hence the
same winning condition.
Thus, every successful attack on $\mathcal{C}$ induces one on
$\mathcal{S}$. Applying this argument to both $\mathcal{P}$ and
$\mathcal{Q}$ proves (iii).
\end{proof}

\section{Attack Strategy under Fully Refreshed Components}
\label{app:fully-refreshed-attack}

This appendix discusses an attack strategy for the alternative schedule
in which all four obfuscation components are refreshed across views:
\[
  \widetilde W_t=(W+S_t+L_t)D_t\Pi_t.
\]
The attack retains the same three-stage structure as \sysattack in the
main text: Stage~1 aligns the relative permutations, Stage~2 removes the
low-rank terms, and Stage~3 recovers the absolute column order and scales
before fine-tuning. The difference is that $S_t$ is no longer shared across
views and therefore also appears in cross-view differences. To handle this,
the attacker repeats each LDD query on different row subsets. Clean subsets
provide the normal rank used for permutation alignment, whereas subsets that
contain a nonzero entry of $S_t$ produce an additional rank increase. By
recording which sampled positions cause these increases, the attacker
estimates the support of each $S_t$ and excludes those entries when recovering
the low-rank terms.

\paragraph{Stage 1: permutation alignment and support localization.}
The original LDD test computes the rank of two column blocks using all
rows. Here, the attacker instead repeats each test on small random subsets of
rows. If a sampled block contains no nonzero entry from either $S_a$ or $S_b$,
the sample is clean and obeys the same rank--overlap relation used in
Stage~1: the rank reveals how many victim columns the two blocks share. If the
sample intersects a sparse entry, the additional perturbation generally
increases the rank. Repeating the test and taking the smallest stable rank
therefore recovers the clean LDD result. The seed, disambiguation, and
propagation procedures from the main attack can then be applied without
change to recover the relative permutation between the two views.

The same tests also reveal the sparse support. Whenever the observed
rank is larger than the clean rank, the sampled row--column positions receive
an anomaly vote. The attacker repeats this test with different row subsets
and different column blocks. Positions containing a sparse perturbation
receive such votes consistently, whereas clean positions do not. Aggregating
the votes across view pairs yields an estimated support $\widehat\Omega_t$ for
each $S_t$. Stage~1 thus outputs the relative permutations together with a
support estimate $\widehat\Omega_t\approx\suppop(S_t)$ for every view.

\paragraph{Stage 2: masked removal of the low-rank terms.}
After aligning the permutations, the attacker estimates the relative
column scales using only entries outside the estimated sparse supports. Let
$\widetilde W_t^\circ$ denote the resulting view in the anchor's column order
and scale. For two views $a$ and $b$, their difference is
\[
  \widetilde W_a^\circ-\widetilde W_b^\circ
  =(S_a-S_b+L_a-L_b)D_0.
\]
Let $\widehat\Omega_{ab}=\widehat\Omega_a\cup\widehat\Omega_b$. Outside
this set, the sparse difference disappears, leaving
\[
  \mathcal P_{\widehat\Omega_{ab}^{c}}
  (\widetilde W_a^\circ-\widetilde W_b^\circ)
  \approx
  \mathcal P_{\widehat\Omega_{ab}^{c}}((L_a-L_b)D_0).
\]
After excluding $\widehat\Omega_{ab}$, the difference contains only
the low-rank terms and reduces to the setting of the original Stage~2. The
original Stage~2 procedure then produces an estimate of $WD_0$ outside the
sparse support; the excluded entries are passed to Stage~3 for fine-tuning.

\paragraph{Stage 3: absolute alignment and fine-tuning.}
Stage~3 receives the Stage~2 estimate of $WD_0$ outside the sparse
support and directly reuses the original Stage~3 procedure for absolute
alignment, scale recovery, and fine-tuning. The attack therefore succeeds
under the fully refreshed schedule.
\end{document}